\numberwithin{equation}{section}
\numberwithin{figure}{section}
\newtheorem{theorem}{Theorem}[section]
\newtheorem{lemma}[theorem]{Lemma}
\newtheorem{remark}[theorem]{Remark}
\begin{document}

\newcommand{\cD}{\mathcal D}
\newcommand{\cR}{\mathcal R}
\newcommand{\cC}{\mathcal C}
\newcommand{\cB}{\mathcal B}
\newcommand{\cA}{\mathcal A}
\newcommand{\cM}{\mathcal M}
\newcommand{\cN}{\mathcal N}

\newcommand{\nn}{\mathbb N}
\newcommand{\cc}{\mathbb{C}}
\newcommand{\zz}{\mathbb Z}
\newcommand{\rr}{\mathbb R}
\newcommand{\ringz}{\mathring{\Sigma_{\mathbb Z}}}
\newcommand{\ve}{\varepsilon}
\newcommand{\tit}{\tilde{t}}
\newcommand{\tif}{\tilde{f}}

\newcommand{\tiD}{\widetilde{\mathcal D}}

\newcommand{\ld}{\lambda}
\newcommand{\til}{\tilde{\lambda}}
\newcommand{\tib}{\tilde{b}}
\newcommand{\tia}{\tilde{a}}
\newcommand{\one}{\mathbbm{1}_{I}(M_{\Lambda})}
\title{Resonances for 1D half-line periodic operators: II. Special case}
\author{Trinh Tuan Phong}
\begin{abstract}
The present paper is devoted to the study of resonances for a  $1$D Schr\"{o}dinger operator with truncated periodic potential.
Precisely, we consider the half-line operator $H^{\nn}=-\Delta +V$ and $H^{\nn}_{L}= -\Delta + V\mathbbm{1}_{[0, L]}$
acting on $\ell^{2}(\mathbb N)$ with Dirichlet boundary condition at $0$ with $L \in \nn$. We describe the resonances of $H^{\nn}_{L}$ near the boundary of the essential spectrum of $H^{\nn}$ as $L \rightarrow +\infty$ under a special assumption. 
\end{abstract}
\maketitle

\section{Introduction}\label{S:Intro}
Let $V$ be a periodic potential of period $p$ and $-\Delta$ be the (negative) discrete Laplacian on $l^{2}(\mathbb Z)$. 
 We define the $1$D Schr\"{o}dinger operators $H^{\mathbb Z}:=-\Delta +V$ acting on $l^{2}(\mathbb Z)$:
 \begin{equation}\label{eq:9.1}
(H^{\mathbb Z}u)(n)= \left ((-\Delta +V)u\right) (n)= u(n-1)+u(n+1) +V(n)u(n), \;\; \forall n\in \mathbb Z
\end{equation} 
 and $H^{\mathbb N}:=-\Delta+V$ acting on $l^{2}(\mathbb N)$ with Dirichlet boundary condition (b.c.) at $0$.\\
 Denote by $\Sigma_{\mathbb Z}$ the spectrum of $H^{\mathbb Z}$ and $\Sigma_{\mathbb N}$ the spectrum of $H^{\mathbb N}$. One has the following description for the spectra of $H^{\bullet}$ where $\bullet \in \{\mathbb N, \mathbb Z\}$:  
\begin{itemize}
\item[$\bullet$]  $\Sigma_{\mathbb Z}$ is a union of disjoint intervals; the spectrum of $H^{\mathbb Z}$ is purely absolutely continuous (a.c.) and the spectral resolution can be obtained via the Bloch-Floquet decomposition (see \cite{pierre76} for more details).
\item[$\bullet$] $\Sigma_{\mathbb N}=\Sigma_{\mathbb Z} \cup \{v_i\}_{i=1}^m$ where $\Sigma_{\mathbb Z}$ is the a.c. spectrum of $H^{\mathbb N}$ and $\{v_i\}_{i=1}^m$ are isolated simple eigenvalues of $H^{\nn}$ associated to exponentially decaying eigenfunctions (c.f. \cite{pavlov94}).
\end{itemize}
Pick a large natural number $L$, we set:
$$ \text{$H^{\mathbb N}_L:=-\Delta+ V\mathbbm{1}_{[0, L]}$ on $l^2(\mathbb N)$ with Dirichlet boundary condition (b.c.) at $0$. }$$
It is easy to check that the operator $H^{\nn}_L$ is self-adjoint. Then, the resolvent $z\in \mathbb C^{+}\mapsto (z-H^{\nn}_L)^{-1}$ is well defined  on $\mathbb C^{+}$. Moreover, one can show that $z \mapsto (z-H^{\nn}_L)^{-1}$ admits a meromorphic continuation from $\mathbb C^{+}$ to $\mathbb C \backslash \left((-\infty, -2] \cup [2, +\infty)\right) $ with values in the self-adjoint operators from $l^2_{comp}$ to $l^2_{loc}$. Besides, the number of poles of this meromorphic continuation in the lower half-plane $\{Im E<0\}$ is at most equal to $L$ (c.f.  \cite[Theorem 1.1]{klopp13}). This kind of results is an analogue in the discrete setting for meromorphic continuation of resolvents of partial differential operators (c.f. e.g. \cite{SZ91}).\\ 
Now, we define the \textit {resonances} of $H^{\nn}_L$, the main objet to study in the present paper, as the poles of the above meromorphic continuation. The resonance widths, the imaginary parts of resonances, play an important role in the large time behavior of wave packets, especially the resonances of the smallest width that give the leading order contribution (see \cite{SZ91} for an intensive study of resonances in the continuous setting and \cite{IK12, IK14, IK141, BNW05, korotyaev11} for a study of resonances of various $1$D operators).
\subsection{Resonance equation for the operator $H^{\nn}_{L}$}\label{Ss:equation}
Let $H_L$ be $H^{\mathbb N}_L$ restricted to $[0, L]$ with Dirichlet b.c. at $L$. Then, assume that 
\begin{itemize}
\item $\lambda_0 \leq \lambda_1\leq \ldots \leq \lambda_L$ are Dirichlet eigenvalues of $H_L$;
\item $a_k=a_k(L):=|\varphi_k(L)|^2$ where $\varphi_k$ is a normalized eigenvector associated to $\lambda_k$.
\end{itemize}
Then, resonances of $H^{\mathbb N}_L$ are solutions of the following equation (c.f. \cite[Theorem 2.1]{klopp13}): 
\begin{equation}\label{eq:resN}
S_L(E):=\sum_{k=0}^L \dfrac{a_k}{\lambda_k-E}=-e^{-i\theta(E)}, \qquad E= 2\cos\theta(E).
\end{equation} 
where the determination of $\theta(E)$ is chosen so that $\text{Im}\theta(E)>0$ and $\text{Re}\theta(E)\in (-\pi,0)$ when $\text{Im}E>0$.\\
Note that the map $E\mapsto \theta(E)$ can be continued analytically from $\cc^{+}$ to the cut plane $\cc\backslash ((-\infty, 2]\cup [2,+\infty))$ and its continuation is a bijection from $\cc\backslash ((-\infty, 2]\cup [2,+\infty))$ to $(-\pi, 0)+i\rr$. In particular, $\theta(E) \in (-\pi, 0)$ for all $E\in (-2, 2)$.\\
Taking imaginary parts of two sides of the resonance equation \eqref{eq:resN}, we obtain that    
\begin{equation}\label{eq:easyfact}
\text{Im} S_L(E)=\text{Im} E \sum_{k=0}^L \dfrac{a_k}{|\lambda_k-E|^2}=e^{\text{Im}E}\sin(\text{Re}\theta(E)).
\end{equation}
Note that, according to the choice of the determination $\theta(E)$, whenever $\text{Im} E>0$,\\ 
$\sin(\text{Re}\theta(E))$ is negative and $\text{Im}S_L(E)>0$. Hence, all resonances of $H^{\mathbb N}_L$ lie completely in the lower half-plane $\{\text{Im} E<0\}$.\\

The distribution of resonances of $H^{\nn}_L$ in the limit $L\rightarrow +\infty$ was studied intensively in \cite{klopp13}. 
All results proved in \cite{klopp13} assume that the real part of resonances are far from the boundary point of the spectrum $\Sigma_{\zz}$ and far from the point $\pm 2$, the boundary of the essential spectrum of $-\Delta$. By "far", we mean the distance between resonances and $\partial \Sigma_{\mathbb Z} \cup \{\pm2\} $ is bigger than a positive constant independent of $L$.\\
In the present paper, we are interested in phenomena which can happen for resonances whose real parts are near $\partial \Sigma_{\zz}$ but still far from $\pm 2$. To study resonances below compact intervals in $\ringz$, the interior of $\Sigma_{\zz}$, the author in \cite{klopp13} introduced an analytical method to simplify and resolve the equation \eqref{eq:resN} (c.f. \cite[Theorems 5.1 and 5.2]{klopp13}). Unfortunately, such a method was efficient inside $\Sigma_{\zz}$ but does not seem to work near $\partial \Sigma_{\zz}$. Hence, a different approach is thus needed to study resonances near $\partial \Sigma_{\zz}$.\\ 
We observe from \eqref{eq:resN} that the behavior of resonances is completely determined by spectral data $(\ld_{k})_{k}, (a_{k})_{k}$ of $H_{L}$. As pointed out in \cite{klopp13}, the parameters $a_k$ associated to $\lambda_k \in \ringz$ near a boundary point $E_{0} \in \partial \Sigma_{\zz}$ can have two different behaviors depending on the potential $V$: Either $a_k \asymp \frac{1}{L}$ or $a_{k}$ is much smaller, $a_k \asymp  \frac{|\lambda_k-E_0|}{L}$. Each case requires a particular approach for studying resonances and note that \textbf{the latter is the generic one}. \textit{In this paper, we deal with the non-generic case i.e. $a_{k} \asymp \frac{1}{L}$} (c.f. \cite{phong151} for the treatment of the generic case). 
\subsection{Description of resonances of $H^{\mathbb N}_L$ near $\partial \Sigma_{\zz}$ in the non generic case}
Let $E_{0} \in (-2, 2)$ be the left endpoint of the band $B_{i}$ of $\Sigma_{\zz}$ and $L=Np+j$ with $0\leq j \leq p-1$. We will study resonances in the domain $\cD= [E_0, E_0+\ve_1]-i[0, \ve_2]$ where $\ve_1 \asymp \ve^2$ and $\ve_2 \asymp \ve^5$ with $\ve>0$ small.\\ 
Note that, for eigenvalues $\lambda_{k} \in \ringz$ near $E_{0}$, we have $\lambda_{k} \asymp E_{0}+\frac{(k+1)^{2}}{L^{2}}$ (see Lemma \ref{L:asymptoticformula}). This leads us to make the rescaling $z=L^2(E-E_0)$ and track down rescaled resonances $z$ in the new region $\tilde{\mathcal D}= \tilde{\mathcal D_{\ve}}=[0, \ve_{1} L^2]-i[0, \ve_{2} L^2]$. Corresponding to this rescaling, we define rescaled eigenvalues $\til_k=L^2(\lambda_k-E_0)$ and $\tilde{a}_k=L a_k$. 
Then, the resonance equation \eqref{eq:resN} is rewritten as 
\begin{equation}\label{eq:4.0}
f_L(z):=\sum_{k=0}^L \frac{\tia_k}{\til_k -z}=-\frac{1}{L} e^{-i\theta\left(E_{0}+\frac{z}{L^{2}} \right)}.
\end{equation}
Our goal is to describe solutions of \eqref{eq:4.0} in the domain $\tilde{\cD}$. Let $(\lambda^i_\ell)_{\ell}$ with $\ell \in [0, n_{i, \ve}]$  be all (distinct) eigenvalues of $H_L$ belonging to $[E_0, E_0+\ve_{1}] \subset B_{i}$. Note that $n_{i, \ve} \asymp \ve L$ for $L$ large by Lemma \ref{L:asymptoticformula}. Here we use the (local) enumeration w.r.t. bands of $\Sigma_{\zz}$ to enumerate eigenvalues in the band $B_{i}$. We renumber the corresponding $a_{k}$ in the same way. Then, it suffices to study the rescaling resonance equation \eqref{eq:4.0} in each rectangle $\mathcal D^i_n:=[\til^i_n, \til^i_{n+1}]-i[0, \ve^5 L^2]$ with $0 \leq n \leq \ve L/C_{1}$ with $C_{1}>0$ large and in the rectangle $\mathcal R^i=[0, \til^i_0]-i[0, L^2\ve^{5}]$.\\
\textit{Lemma \ref{L:asymptoticformula} implies that, for all $\lambda_k  \in \mathring{B_i}$, the interior of $B_{i}$, and close to $E_{0}$, $\til_k \asymp (k+1)^2$ (we will use $\til_k \asymp k^2$ when $k\geq 1$). Moreover, from our assumption on $a_{k}$, the associated $\tia_k$ has the constant magnitude. Note that, in the non generic case, it is possible that $E_0 \in \sigma(H_L)$ for $L$ large. Then, according to our enumeration, $\til^i_0=0$ and $\tia^i_0$ is still of order $1$ (c.f. \cite[Remark 3.4]{phong151}).}

Here is our strategy to study resonances in the present case.\\ 
For $0\leq n \leq \ve L/C_1$ with $C_1>0$ large, we define $\Delta_n:=\frac{(n+1)}{\kappa(\ln (n+1)+1)}$ and $x_0=\til^{i}_{n+1}-\til^{i}_n$. First of all, we establish the subregions in $\cD^i_n$ and $\mathcal R^i$ which contain no resonances. They are rectangles greyed out in Figures \ref{F:f1}-\ref{F:f4}. The white regions $\Omega^{i}_n$, $\tilde{\Omega}^{i}_n$ and $\Omega^{i}$ in Figures \ref{F:f1}-\ref{F:f4} are the regions where we will study the existence and uniqueness of resonances. 
\begin{figure}[H]
	\begin{center}
		\begin{tikzpicture}[line cap=round,line join=round, x=0.6cm,y=0.6cm]
		\clip(0.29,-4.76) rectangle (12.17,3.55);
		\fill[color=gray,fill=gray,fill opacity=0.33] (2,2) -- (4,2) -- (4,0) -- (2,0) -- cycle;
		\fill[color=gray, fill=gray,fill opacity=0.33] (8,2) -- (10,2) -- (10,0) -- (8,0) -- cycle;
		\fill[color=gray,fill=gray,fill opacity=0.33] (2,-2) -- (1.99,-4.72) -- (10.01,-4.72) -- (10,-2) -- cycle;
		\draw [dash pattern=on 4pt off 4pt] (2,2)-- (4,2);
		\draw [dash pattern=on 4pt off 4pt] (4,2)-- (8,2);
		\draw (4,2)-- (8,2);
		\draw (4,2)-- (4,0);
		\draw (4,0)-- (2,0);
		\draw (2,-2)-- (2,0);
		\draw (8,2)-- (8,0);
		\draw (10,0)-- (8,0);
		\draw (10,0)-- (10,-2);
		\draw [dash pattern=on 4pt off 4pt] (10,-2)-- (2,-2);
		\draw [dash pattern=on 4pt off 4pt] (2,2)-- (2,0);
		\draw [dash pattern=on 4pt off 4pt] (8,2)-- (10,2);
		\draw [dash pattern=on 4pt off 4pt] (10,2)-- (10,0);
		\draw (10,-2)-- (2,-2);
		\draw (1.4,3.50) node[anchor=north west] {$ \tilde{\lambda}^{i}_n $};
		\draw (9.70,3.50) node[anchor=north west] {$ \tilde{\lambda}^{i}_{n+1} $};
		\draw (10.39,-1.01) node[anchor=north west] {$-\frac{x_0^2}{\varepsilon L} $};
		\draw (2.9,3.50) node[anchor=north west] {$ \tilde{\lambda}^{i}_n+\Delta_n $};
		\draw (6.4,3.50) node[anchor=north west] {$\tilde{\lambda}^{i}_{n+1}-\Delta_n $};
		\draw [<->, >=latex] (1.42,2) -- (1.42,0);
		\draw (0.71,1.63) node[anchor=north west] {$ \Delta_n $};
		\draw (5.6,0.38) node[anchor=north west] {$ \Omega^{i}_n $};
		\draw [] (2,2)-- (4,2);
		\draw [] (4,2)-- (4,0);
		\draw [] (4,0)-- (2,0);
		\draw [] (2,0)-- (2,2);
		\draw [] (8,2)-- (10,2);
		\draw [] (10,2)-- (10,0);
		\draw [] (10,0)-- (8,0);
		\draw [] (8,0)-- (8,2);
		\draw [] (2,-2)-- (1.99,-4.72);
		\draw [] (1.99,-4.72)-- (10.01,-4.72);
		\draw [] (10.01,-4.72)-- (10,-2);
		\draw [] (10,-2)-- (2,-2);
		\draw (10.39,-3.89) node[anchor=north west] {$-\varepsilon^5 L^2$};
		\draw (10.0,0.5) node[anchor=north west] {$-\Delta_n$};
		\begin{scriptsize}
		\fill [] (2,2) circle (1.5pt);
		\draw[] (2.27,2.46) node {};
		\fill [] (4,2) circle (1.5pt);
		\draw[] (4.25,2.35) node {$A$};
		\fill [] (8,2) circle (1.5pt);
		\draw[] (8.28,2.35) node {$B$};
		\fill [] (4,0) circle (1.5pt);
		\draw[] (4.25,0.45) node {$D$};
		\fill [] (2,0) circle (1.5pt);
		\draw[] (2.24,0.45) node {$E$};
		\fill [] (2,-2) circle (1.5pt);
		\draw[] (2.27,-1.53) node {$F$};
		\fill [] (8,0) circle (1.5pt);
		\draw[] (8.28,0.45) node {$C$};
		\fill [] (10,0) circle (1.5pt);
		\draw[] (10.29,0.45) node {$H$};
		\fill [] (10,-2) circle (1.5pt);
		\draw[] (10.19,-1.53) node {$G$};
		\fill [] (10,2) circle (1.5pt);
		\draw[] (1.75,2.46) node {};
		\draw[] (1.71,0.45) node {};
		\fill [] (1.99,-4.72) circle (1.5pt);
		\fill [] (10.01,-4.72) circle (1.5pt);
		\end{scriptsize}
		\end{tikzpicture}
		\caption[Figure1]{Resonance free region as $\Delta_n<\frac{x_0^2}{\ve L}$}
		\label{F:f1}
	\end{center}
\end{figure}

\begin{figure}[H]
	\begin{center}
		\begin{tikzpicture}[line cap=round,line join=round, x=0.6cm,y=0.6cm]
		\clip(-0.2,-2.91) rectangle (12.71,3.55);
		\fill[color=gray,fill=gray,fill opacity=0.3] (2,2) -- (4,2) -- (4,0) -- (2,0) -- cycle;
		\fill[color=gray,fill=gray,fill opacity=0.3] (8,2) -- (10,2) -- (10,0) -- (8,0) -- cycle;
		\fill[color=gray,fill=gray,fill opacity=0.3] (4,1) -- (4,0) -- (2,0) -- (2,-2) -- (10,-2) -- (10,0) -- (8,0) -- (8,1) -- cycle;
		\draw [dash pattern=on 1pt off 1pt] (4,2)-- (8,2);
		\draw (4,2)-- (8,2);
		\draw [dash pattern=on 3pt off 3pt] (4,2)-- (4,0);
		\draw [dash pattern=on 1pt off 1pt] (4,0)-- (2,0);
		\draw [dash pattern=on 1pt off 1pt] (8,2)-- (8,0);
		\draw [dash pattern=on 1pt off 1pt] (10,0)-- (8,0);
		\draw (8,2)-- (10,2);
		\draw (10,2)-- (10,0);
		\draw (1.60,3.50) node[anchor=north west] {$ \tilde{\lambda}^{i}_n $};
		\draw (9.75,3.50) node[anchor=north west] {$\tilde{\lambda}^{i}_{n+1} $};
		\draw (3.0,3.50) node[anchor=north west] {$ \tilde{\lambda}^{i}_n+\Delta_n $};
		\draw (6.5,3.50) node[anchor=north west] {$ \tilde{\lambda}^{i}_{n+1}-\Delta_n $};
		\draw [<->, >=latex] (1.42,2) -- (1.42,0);
		\draw (0.3,1.5) node[anchor=north west] {$\Delta_n $};
		\draw (4,1)-- (8,1);
		\draw (4,2)-- (4,1);
		\draw (8,2)-- (8,1);
		\draw (5.75,1.99) node[anchor=north west] {$\tilde{\Omega}^{i}_n$};
		\draw (10.38,1.78) node[anchor=north west] {$- \frac{x_0^2}{\varepsilon L} $};
		\draw (10.43,0.53) node[anchor=north west] {$-\Delta_n$};
		\draw (10.43,-1.45) node[anchor=north west] {$-\varepsilon^5 L^2$};
		\draw [dash pattern=on 3pt off 3pt] (2,1)-- (4,1);
		\draw (2,2)-- (4,2);
		\draw (2,2)-- (2,0);
		\draw (2,0)-- (4,0);
		\draw (4,0)-- (4,1);
		\draw [] (2,2)-- (4,2);
		\draw [] (4,2)-- (4,0);
		\draw [] (4,0)-- (2,0);
		\draw [] (2,0)-- (2,2);
		\draw [dash pattern=on 3pt off 3pt] (8,1)-- (10,1);
		\draw [] (8,2)-- (10,2);
		\draw [] (10,2)-- (10,0);
		\draw [] (10,0)-- (8,0);
		\draw [] (8,0)-- (8,2);
		\draw [] (4,1)-- (4,0);
		\draw [] (4,0)-- (2,0);
		\draw [] (2,0)-- (2,-2);
		\draw [] (2,-2)-- (10,-2);
		\draw [] (10,-2)-- (10,0);
		\draw [] (10,0)-- (8,0);
		\draw [] (8,0)-- (8,1);
		\draw [] (8,1)-- (4,1);
		\begin{scriptsize}
		\fill [] (4,2) circle (1.5pt);
		\draw[] (4.4,2.29) node {$A_1$};
		\fill [] (8,2) circle (1.5pt);
		\draw[] (7.7,2.29) node {$B_1$};
		\fill [] (4,0) circle (1.5pt);
		\fill [] (2,0) circle (1.5pt);
		\fill [] (8,0) circle (1.5pt);
		\fill [] (10,0) circle (1.5pt);
		\fill [] (10,2) circle (1.5pt);
		\fill [] (4,1) circle (1.5pt);
		\draw[] (4.4,1.29) node {$D_1$};
		\fill [] (8,1) circle (1.5pt);
		\draw[] (7.7,1.29) node {$C_1$};
		\fill [] (2,1) circle (1.5pt);
		\fill [] (10,1) circle (1.5pt);
		\fill [] (2,-2) circle (1.5pt);
		\fill [] (10,-2) circle (1.5pt);
		\fill [] (2,2) circle (1.5pt);
		\end{scriptsize}
		\end{tikzpicture}
		\caption[Figure2]{Resonance free resonance as $\Delta_n \geq \frac{x_0^2}{\ve L}$}
		\label{F:f2}
	\end{center}
\end{figure}
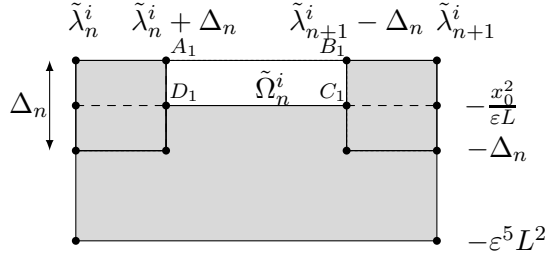

\begin{figure}[H]
\begin{center}
\begin{tikzpicture}[line cap=round,line join=round,x=0.6cm,y=0.6cm]
\clip(0.84,-3.33) rectangle (12.08,3.36);
\fill[color=gray,fill=gray,fill opacity=0.3] (8,2) -- (10,2) -- (10,0) -- (8,0) -- cycle;
\fill[color=gray,fill=gray,fill opacity=0.3] (8,1) -- (2,1) -- (2,-2) -- (10,-2) -- (10,0) -- (8,0) -- cycle;
\draw [dash pattern=on 1pt off 1pt] (8,2)-- (8,0);
\draw [dash pattern=on 1pt off 1pt] (10,0)-- (8,0);
\draw (8,2)-- (10,2);
\draw (10,2)-- (10,0);
\draw (1.7,3.3) node[anchor=north west] {$0$};
\draw (9.85,3.3) node[anchor=north west] {$\tilde{\lambda}^i_{0} $};
\draw (7.4,3.4) node[anchor=north west] {$\tilde{\lambda}^i_0-\delta_1$};
\draw (8,2)-- (8,1);
\draw (5.75,1.99) node[anchor=north west] {$\Omega^i$};
\draw (10.38,1.78) node[anchor=north west] {$- \frac{1}{\varepsilon L} $};
\draw (10.43,0.53) node[anchor=north west] {$-\delta_1$};
\draw (10.43,-1.45) node[anchor=north west] {$-\varepsilon^4 L^2$};
\draw [dash pattern=on 3pt off 3pt] (8,1)-- (10,1);
\draw (8,2)-- (10,2);
\draw  (10,2)-- (10,0);
\draw (10,0)-- (8,0);
\draw  (8,0)-- (8,2);
\draw (2,2)-- (8,2);
\draw (2,1)-- (8,1);
\draw (2,-2)-- (10,-2);
\draw (10,-2)-- (10,0);
\draw  (8,1)-- (2,1);
\draw  (2,1)-- (2,-2);
\draw  (2,-2)-- (10,-2);
\draw  (10,-2)-- (10,0);
\draw  (10,0)-- (8,0);
\draw  (8,0)-- (8,1);
\draw (2,2)-- (2,1);
\begin{scriptsize}
\fill (8,2) circle (1.5pt);
\draw (8.3,2.2) node {$B_3$};
\fill  (8,0) circle (1.5pt);
\fill  (10,0) circle (1.5pt);
\fill  (10,2) circle (1.5pt);
\fill  (8,1) circle (1.5pt);
\draw (8.3,1.29) node {$C_3$};
\fill  (2,1) circle (1.5pt);
\draw (2.3,1.29) node {$D_3$};
\fill  (10,1) circle (1.5pt);
\fill  (2,-2) circle (1.5pt);
\fill  (10,-2) circle (1.5pt);
\fill  (2,2) circle (1.5pt);
\draw (2.3,2.2) node {$A_3$};
\end{scriptsize}
\end{tikzpicture}
\caption[Figure4]{Resonance free region in $\mathcal R^i=[0, \til^i_0]-i[0, L^2\ve]$ }
\label{F:f4}
\end{center}
\end{figure}
Next, we give a description of resonances in $\Omega^{i}_{n}$. Note that this domain corresponds to the case $n\gtrsim \frac{L}{\ln L}$.
\begin{theorem}\label{T:abcd}
Assume that $n>\eta \frac{L}{\ln L}$ and put $x_0=\til^{i}_{n+1}-\til^{i}_n$.\\ 
Let $\Omega^i_n$ be the complement of two squares $[\til^{i}_n, \til^{i}_n+\Delta_n]+i[-\Delta_n, 0]$ and $ [\til^{i}_{n+1}, \til^{i}_{n+1}-\Delta_n]+i[-\Delta_n, 0]$ in the rectangle  $[\til^{i}_n, \til^{i}_{n+1}]+i\left[-\frac{x_0^2}{\ve L}, 0\right]$ ( the region $ABCHGFED$ in Figure \ref{F:f1}).\\
Then, there exists at least one rescaled resonance in $\Omega^i_n$. 
Hence, $|Im z| \lesssim \frac{n^2}{\ve L}$ for all resonances in $\Omega^i_n$.\\
Moreover, if 
 $-\frac{1}{L}e^{-i \theta(E_0)}$ belongs to $A'B'C'D'=f_L(ABCD)$, the rescaled resonance, says $z_n$, is unique and
 $$|\text{Im} z_n| \leq \Delta_n=\frac{n}{\kappa \ln n}\asymp \frac{n}{\kappa \ln L}\lesssim \frac{n^2}{\ve L}.$$ 
\end{theorem}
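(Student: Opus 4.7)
The plan is to apply the argument principle to the holomorphic function
\[
g_L(z) := f_L(z) + \tfrac{1}{L}e^{-i\theta(E_0+z/L^2)},
\]
whose zeros in $\Omega^i_n$ are exactly the rescaled resonances sought. Decompose $f_L = h_n + R_n$ with
\[
h_n(z) := \frac{\tia^i_n}{\til^i_n-z} + \frac{\tia^i_{n+1}}{\til^i_{n+1}-z}
\]
isolating the two adjacent-pole contributions, while $R_n$ is the sum over the remaining poles. Lemma \ref{L:asymptoticformula} (supplying $\til^i_k\asymp k^2$, $\tia^i_k\asymp 1$, and pole spacing $\til^i_{k+1}-\til^i_k\asymp k$) yields $|R_n(z)|\lesssim \ln L/n$ uniformly on $\Omega^i_n$. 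The right-hand side is nearly constant: writing $c(z):=-\tfrac{1}{L}e^{-i\theta(E_0+z/L^2)}$, we have $c(z)=c_0+O(\ve^2/L)$ with $c_0:=-\tfrac{1}{L}e^{-i\theta(E_0)}$ of modulus $\asymp 1/L$.

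The first step is to show $g_L$ does not vanish on $\partial\Omega^i_n$, arguing piecewise. On the two $\Delta_n$-corner contours near $\til^i_n,\til^i_{n+1}$, the adjacent-pole term yields $|h_n|\asymp \kappa\ln n/n$, which dominates both $|R_n|$ and $|c|$ once $\kappa$ is fixed large enough. On the vertical sides $\text{Re}\, z=\til^i_n$ or $\til^i_{n+1}$ at depth $y\in[\Delta_n,x_0^2/(\ve L)]$, the adjacent pole forces $|\text{Im}\, f_L|\gtrsim 1/y\gtrsim \ve L/n^2\gg 1/L$; on the bottom edge similarly $|\text{Im}\, f_L|\gtrsim 1/(\ve L)\gg 1/L$. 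On the top real segment $[\til^i_n+\Delta_n,\til^i_{n+1}-\Delta_n]$, $f_L$ is real-valued while $\text{Im}\, c(z)=\tfrac{1}{L}\sin\theta(E_0+z/L^2)\neq 0$ since $E_0\in(-2,2)$. In every case the gap exceeds $|c-c_0|=O(\ve^2/L)$ by a large factor, so by Rouch\'{e} the zero-count of $g_L$ on $\Omega^i_n$ coincides with that of $f_L-c_0$.

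For existence, trace $f_L(\partial\Omega^i_n)$ traversed counter-clockwise. Since $h_n$ dominates $R_n$ on the entire boundary once $\kappa$ is large, the image qualitatively follows $h_n(\partial\Omega^i_n)$: the two corner contours produce arcs of modulus $\asymp \kappa\ln n/n$ sweeping through the third and fourth quadrants, the vertical sides sweep the negative imaginary axis between heights $\kappa\ln n/n$ and $\ve L/n^2$, the bottom edge contributes a curve contained in the closed lower half-plane, and the top real segment sweeps the real axis monotonically from $\asymp \tia^i_{n+1}/\Delta_n$ through $0$ to $\asymp -\tia^i_n/\Delta_n$. The total change of argument is $+2\pi$, so the image winds exactly once around any point close to $0$, and in particular around $c_0$. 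The argument principle delivers exactly one zero of $f_L-c_0$ in $\Omega^i_n$, giving existence; the bound $|\text{Im}\, z|\lesssim n^2/(\ve L)$ for any such resonance is immediate from $\Omega^i_n\subset[\til^i_n,\til^i_{n+1}]+i[-x_0^2/(\ve L),0]$.

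For the last clause, under $c_0\in f_L(ABCD)$ the resonance can be placed in the inner rectangle $ABCD$: since $h_n'$ dominates $R_n'$ there (with $|h_n'|\asymp 1/\Delta_n^2$ and argument confined to a half-plane while $|R_n'|\lesssim \ln L/n^2\ll 1/\Delta_n^2$), $f_L$ is univalent on $ABCD$, whence $f_L^{-1}(c_0)\cap ABCD=\{z_n\}$ for a unique $z_n$; combined with the winding-one count of Step 3 this forces $z_n$ to be the unique resonance in all of $\Omega^i_n$, and $z_n\in ABCD$ yields $|\text{Im}\, z_n|\leq \Delta_n$. The main technical obstacle is the uniform control of $R_n$ in Step 1: one must estimate both the logarithmic contribution $\sum_{k\neq n,n+1}1/[|k-n|(k+n)]\lesssim\ln L/n$ from same-band poles, and the lower-order contribution from eigenvalues in other bands of $\Sigma_\zz$, which sit at distance $\gtrsim \ve_1 L^2\gg x_0^2$ from $\Omega^i_n$.
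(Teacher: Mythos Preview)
Your Rouch\'{e} reduction (replacing $c(z)$ by the constant $c_0$) matches the paper's Lemma~\ref{L:compare} and is fine. The problems start with the winding argument and the univalence argument on $ABCD$, both of which rest on the claim that $h_n$ dominates $R_n$ on the whole boundary. This claim is false away from the poles.

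Take $z$ at the midpoint of $AB$. There $h_n(z)=-\tfrac{2\tia^i_n}{x_0}+\tfrac{2\tia^i_{n+1}}{x_0}$, which is $O(1/n)$ at best and can be far smaller (indeed $f_L$ vanishes somewhere on $AB$), whereas $|R_n|$ is typically $\asymp\ln n/n$ by Lemma~\ref{L:modulus}. So on the middle portions of $AB$, $CD$, $FG$ the remainder $R_n$ is \emph{not} dominated by $h_n$, and you cannot read off the shape of $f_L(\partial\Omega^i_n)$ from $h_n(\partial\Omega^i_n)$. Your assertion that the image ``qualitatively follows $h_n(\partial\Omega^i_n)$'' and winds exactly once around $c_0$ is therefore unjustified; note that the theorem itself only claims \emph{at least one} resonance, and the paper does not attempt a global winding count.

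The same issue sinks your univalence argument on $ABCD$. You write $|h_n'|\asymp 1/\Delta_n^2$, but this only holds within distance $O(\Delta_n)$ of a pole. At the centre of $ABCD$ one has $|h_n'|\asymp 1/x_0^2\asymp 1/n^2$, which is the \emph{same} order as $|R_n'|\asymp 1/n^2$ (Lemma~\ref{L:modulus}, second bullet). So $h_n'$ does not dominate $R_n'$, and derivative domination cannot deliver univalence.

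The paper circumvents both problems by working directly with $f_L$ rather than comparing to $h_n$. For univalence on $ABCD$ it shows edge by edge (Lemma~\ref{L:simplecontour}) that $f_L$ is monotone in the relevant coordinate, that the images of the four edges form a simple closed curve, and that $|f_L'|\gtrsim 1/n^2$ on $\partial(ABCD)$; the argument principle then gives bijectivity (Lemma~\ref{L:bi}). For existence it does \emph{not} compute a winding number over $\partial\Omega^i_n$: instead it fixes the rectangle $MNOP\ni c_0$, and shows separately (Lemma~\ref{L:belong}) that $f_L(EFGH)\supset MNOP\setminus A'B'C'D'$ by bounding $\mathrm{Im}\,f_L$ from above by $-C/(\varepsilon L)$ on every side of $EFGH$ except $CD$. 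Thus either $c_0\in A'B'C'D'$ (and the unique preimage lies in $ABCD$) or $c_0\in f_L(EFGH)$; in both cases a resonance exists in $\Omega^i_n$.
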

For $n$ smaller, our result is more satisfactory. We can be sure that there is one and only one resonance in $\tilde{\Omega}^i_n$.
\begin{theorem}\label{T:contour1} Pick $n<\frac{\eta L}{\ln L}$ with $\eta>0$ small.
Let $x_0=\til_{n+1}-\til_n$ and $\tilde{\Omega}^i_n$ be the rectangle $[\til_n+\Delta_n, \til_{n+1}-\Delta_n] +i\left[-\frac{x_0^2}{\ve L}, 0 \right]$ in Figure \ref{F:f2}.\\
Then, $f_L$ is bijective from $\tilde{\Omega}^i_n$ on $f_L(\tilde{\Omega}^i_n)$ and $|f'_L(z)| \gtrsim \frac{1}{n^2}$. Moreover, there exists a unique rescaled resonance $\tilde{z}_n$ in  $\tilde{\Omega}^i_n$. It satisfies, 
$$ |\text{Im} \tilde{z}_n| \lesssim \frac{n^2}{\ve L}.$$
\end{theorem}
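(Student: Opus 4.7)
The plan is to prove the three conclusions in order: the derivative bound $|f_L'(z)|\gtrsim 1/n^2$, the bijectivity of $f_L$, and the existence/uniqueness of the rescaled resonance. The key quantitative input is that, in the regime $n<\eta L/\ln L$ with $\eta$ taken sufficiently small relative to $\ve$, the vertical depth $v_{\max}:=x_0^2/(\ve L)\asymp n^2/(\ve L)$ of the rectangle is much smaller than the horizontal buffer $\Delta_n\asymp n/\ln n$: indeed $v_{\max}/\Delta_n\asymp n\ln n/(\ve L)\leq \eta/\ve$, which I will use repeatedly.

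First I would establish $\mathrm{Re}\, f_L'(z)\gtrsim 1/n^2$ on $\tilde{\Omega}^i_n$, which immediately yields the derivative bound. Writing $z=x-iv$ with $v\in[0,v_{\max}]$ and using $|\til_k-x|\geq \Delta_n\geq 2v$ for every index $k$, each term of
\[ \mathrm{Re}\, f_L'(z)=\sum_{k} \tia_k\,\frac{(\til_k-x)^2-v^2}{\bigl((\til_k-x)^2+v^2\bigr)^2} \]
is nonnegative and comparable to $\tia_k/(\til_k-x)^2$; retaining just $k=n$ and $k=n+1$ (where $\tia_k\asymp 1$ by the non-generic hypothesis and $(\til_k-x)^2\leq x_0^2\asymp n^2$) gives the bound. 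Since $\tilde{\Omega}^i_n$ is convex and $\mathrm{Re}\, f_L'>0$ throughout, the Noshiro-Warschawski univalence criterion applies: for any $z_1\neq z_2$ in $\tilde{\Omega}^i_n$,
\[ \mathrm{Re}\left[\frac{f_L(z_2)-f_L(z_1)}{z_2-z_1}\right]=\int_0^1 \mathrm{Re}\, f_L'(z_1+t(z_2-z_1))\,dt>0, \]
so $f_L$ is injective and hence a bijection onto its image.

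For the resonance itself, I would continue the unique real zero $z_0^*$ of $f_L$ on $(\til_n,\til_{n+1})$, which exists because $f_L$ is monotone on that interval, going from $-\infty$ to $+\infty$. Isolating the singular contribution $-\tia_n/\Delta_n\asymp -\kappa\ln n/n$ in $f_L(\til_n+\Delta_n)$ and bounding the remainder $\sum_{k\neq n,n+1}\tia_k/(\til_k-(\til_n+\Delta_n))$ via partial-fraction expansions and harmonic-sum estimates (using $\til_k\asymp k^2$) shows the remainder is only of order $\ln n/n$; choosing $\kappa$ quantitatively large enough then forces $f_L(\til_n+\Delta_n)<0<f_L(\til_{n+1}-\Delta_n)$, placing $z_0^*$ strictly inside $(\til_n+\Delta_n,\til_{n+1}-\Delta_n)$. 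Setting $g_L(z):=-L^{-1}e^{-i\theta(E_0+z/L^2)}$ (with $|g_L|\asymp 1/L$ and $|g_L'|\lesssim 1/L^3$ since $\theta'$ is bounded away from the spectral edges $\pm 2$) and $F(z,\lambda):=f_L(z)-\lambda g_L(z)$, the implicit function theorem applied at $(z_0^*,0)$ produces an analytic branch $z(\lambda)$ with $z(0)=z_0^*$. The identity $z'(\lambda)=g_L(z(\lambda))/\bigl(f_L'(z(\lambda))-\lambda g_L'(z(\lambda))\bigr)$ together with the derivative bound from the previous step gives $|z'(\lambda)|\lesssim n^2/L$, hence $|z(1)-z_0^*|\lesssim n^2/L\ll\min(\Delta_n,v_{\max})$ in our regime, so the whole branch stays inside the closed rectangle. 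A short imaginary-part computation (analogous to \eqref{eq:easyfact}) shows $\mathrm{Im}\, z(\lambda)<0$ for $\lambda>0$, so in fact $z(\lambda)\in\tilde{\Omega}^i_n$ and $\tilde z_n:=z(1)$ is a rescaled resonance. Uniqueness follows from the same mean-value argument as in the injectivity step applied to $f_L-g_L$, whose derivative has real part $\gtrsim 1/n^2-1/L^3>0$; and $|\mathrm{Im}\, \tilde z_n|\lesssim n^2/(\ve L)$ is automatic from $\tilde z_n\in\tilde{\Omega}^i_n$.

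I expect the main difficulty to lie in the localization of $z_0^*$: the leading singular term in $f_L(\til_n+\Delta_n)$ beats the remainder only by a logarithmic factor, so the constant $\kappa$ in the definition of $\Delta_n$ has to be tuned quantitatively, and one must control the full tail $\sum_{k\neq n,n+1}\tia_k/(\til_k-x)$ uniformly rather than relying only on the two adjacent poles.
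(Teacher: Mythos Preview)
Your argument is correct but follows a different path from the paper's.  The paper first invokes the Rouch\'e step of Lemma~\ref{L:compare} to replace the right-hand side by the constant $-e^{-i\theta(E_0)}/L$, then proves bijectivity by analysing $f_L$ on each side of $\partial\tilde\Omega^i_n$ separately (monotonicity of $\mathrm{Re}\,f_L$ or $\mathrm{Im}\,f_L$ on each segment, so the boundary image is a simple curve) and applying the argument principle exactly as in Lemmata~\ref{L:simplecontour} and~\ref{L:bi}; existence is then read off geometrically from the estimate $\mathrm{Im}\,f_L\lesssim -1/(\ve L)$ on the bottom side $C_1D_1$, which forces the target point to lie inside the image.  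Your Noshiro--Warschawski argument is cleaner in this regime precisely because the inequality $v_{\max}\ll\Delta_n$ (equivalent to $n\ln n\ll\ve L$, hence specific to $n<\eta L/\ln L$) makes every summand of $\mathrm{Re}\,f_L'$ positive on the convex rectangle, so univalence is a one-line consequence and no side-by-side boundary study is needed; your continuation from the real zero also gives the slightly sharper localisation $|\tilde z_n-z_0^*|\lesssim n^2/L$.  One small point to make explicit in your write-up: to conclude that the branch $z(\lambda)$ stays in the rectangle you need not just $z_0^*\in(\til_n+\Delta_n,\til_{n+1}-\Delta_n)$ but that its distance to the endpoints dominates the displacement $n^2/L$; this follows at once from $|f_L(\til_n+\Delta_n)|\asymp 1/\Delta_n$ together with the upper bound $f_L'\lesssim 1/\Delta_n^2$ on the interval, which gives $z_0^*-(\til_n+\Delta_n)\gtrsim\Delta_n$.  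Conversely, the paper's boundary-image method, while heavier here, is what carries over unchanged to the non-convex region $\Omega^i_n$ of Theorem~\ref{T:abcd}, where neither convexity nor $v_{\max}\le\Delta_n$ holds and your approach would not apply directly.
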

Finally, there are no rescaled resonances in $\mathcal R^{i}$. 
\begin{theorem}\label{T:contour3}
Pick $0<\delta_1< \til_0$ and $\ve$ small, fixed numbers.\\
Let $E_0 \in (-2, 2)$ be the left endpoint of the $i$th band $B_i$ of $\Sigma_{\zz}$. Let $(\lambda^i_{\ell})_{\ell=0}^{n_i}$ be (distinct) eigenvalues of $H_L$ in $B_{i}$.\\
Let $\Omega^i$ be the rectangle $[0, \til^i_0-\delta_1]+i\left[-\frac{1}{\ve L}, 0\right]$ in Figure \ref{F:f4}.\\
Then, $f_L$ is bijective from $\Omega^i$ on $f_L(\Omega^i)$ and $|f'_L(z)| \geq c>0$. Moreover, 
$f_L(\Omega^i)$ does not contain the point $-\frac{e^{-i \theta(E_0)}}{L}$; hence, there are no resonances in  $\Omega^i$.
\end{theorem}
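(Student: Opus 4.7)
The proof splits into three coordinated parts. First, I will establish a uniform bound $|f'_L(z)| \geq c > 0$ on $\Omega^{i}$; from this, injectivity of $f_{L}$ and hence bijectivity onto $f_{L}(\Omega^{i})$ follow by the standard convex integration argument using the convexity of $\Omega^{i}$. To show that $-e^{-i\theta(E_{0})}/L$ lies outside $f_{L}(\Omega^{i})$, I will establish a similar uniform lower bound $|f_{L}(z)| \geq c > 0$ on $\Omega^{i}$; since the target point has modulus $1/L$, this yields non-containment for $L$ sufficiently large, and therefore the absence of resonances in $\Omega^{i}$.

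The key geometric fact behind both lower bounds is that, for $z \in \Omega^{i}$, the imaginary part $|\text{Im}(z)| \leq 1/(\ve L)$ is much smaller, for $L$ large, than the real separation $|\til_{k} - \text{Re}(z)|$ for every $k$: eigenvalues with $\til_{k} \geq \til^{i}_{0}$ satisfy $\til_{k} - \text{Re}(z) \geq \delta_{1} > 0$ by the definition of $\Omega^{i}$, while eigenvalues with $\til_{k} \leq 0$ satisfy $\text{Re}(z) - \til_{k} \geq \text{Re}(z) \geq 0$ and, using the non-generic-case spectral structure, remain at a definite distance below $0$. Setting $w_{k} = \til_{k} - z$, we obtain
\[
\text{Re}\left(\frac{1}{w_{k}^{2}}\right) = \frac{\text{Re}(w_{k})^{2} - \text{Im}(w_{k})^{2}}{|w_{k}|^{4}} > 0
\]
of order $\text{Re}(w_{k})^{-2}$, so every term of $f'_{L}(z) = \sum_{k} \tia_{k}/w_{k}^{2}$ contributes positively to its real part. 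Extracting the $\ell = 0$ band-$B_{i}$ term, with $\tia^{i}_{0} \asymp 1$ and $|\til^{i}_{0} - z| = O(1)$, yields $\text{Re}(f'_{L}(z)) \geq c_{1} > 0$ uniformly, hence $|f'_{L}(z)| \geq c_{1}$. Bijectivity is then immediate from
\[
f_{L}(z_{2}) - f_{L}(z_{1}) = (z_{2} - z_{1}) \int_{0}^{1} f'_{L}\bigl((1-t) z_{1} + t z_{2}\bigr)\, dt,
\]
whose integrand has real part $\geq c_{1}$, forcing $z_{1} = z_{2}$ whenever $f_{L}(z_{1}) = f_{L}(z_{2})$.

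For non-containment, the analogous analysis of $\text{Re}(\tia_{k}/w_{k}) = \tia_{k}\,\text{Re}(w_{k})/|w_{k}|^{2}$ shows that every band-$B_{i}$ eigenvalue $\til^{i}_{\ell}$ contributes positively, since $\til^{i}_{\ell} \geq \til^{i}_{0} > \text{Re}(z)$; the total band-$B_{i}$ contribution to $\text{Re}(f_{L}(z))$ is bounded below by a constant $c_{2} > 0$ via $\tia^{i}_{0} \asymp 1$, $\til^{i}_{0} \asymp 1$, and the convergent tail $\sum_{\ell \geq 1} 1/(\ell+1)^{2}$. Contributions from eigenvalues in bands other than $B_{i}$, where $|\til_{k}| \asymp L^{2}$, have total modulus $O(1/L)$ since $\sum_{k} \tia_{k} = L$. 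The remaining contribution comes from rescaled eigenvalues $\til_{k} < 0$ corresponding to $H_{L}$-eigenvalues in the spectral gap just below $E_{0}$; these have the wrong sign, so a uniform positive lower bound for $\text{Re}(f_{L}(z))$ requires that such terms do not cancel $c_{2}$. This is the main technical obstacle, and it is here that the non-generic-case structure must be invoked (in the spirit of Lemma \ref{L:asymptoticformula} and \cite{phong151}) to locate the $H_{L}$-eigenvalues below $E_{0}$ and bound their aggregate negative contribution. Granted $\text{Re}(f_{L}(z)) \geq c_{3} > 0$ uniformly on $\Omega^{i}$, the comparison $c_{3} \gg 1/L$ for $L$ large completes the proof.
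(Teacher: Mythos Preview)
Your approach differs from the paper's in a useful way. The paper establishes bijectivity by the contour method used for $\Omega^{i}_{n}$: it studies $f_{L}$ on each side of $\partial\Omega^{i}$, shows the image is a simple curve, and invokes the argument principle (as in Lemmata \ref{L:simplecontour}--\ref{L:bi}). You instead show $\mathrm{Re}\,f'_{L}(z)\geq c_{1}>0$ throughout $\Omega^{i}$ and then use convexity of $\Omega^{i}$ with the mean-value integral; this is shorter and avoids the side-by-side boundary analysis. Likewise, your lower bound $|f_{L}(z)|\geq c_{3}$ (via $\mathrm{Re}\,f_{L}\geq c_{3}$) dispenses with the Rouch\'e reduction of Lemma \ref{L:compare} altogether, since $|f_{L}(z)|\geq c_{3}\gg \tfrac{1}{L}\geq \bigl|\tfrac{1}{L}e^{-i\theta(E)}\bigr|$ already rules out solutions of the original equation \eqref{eq:4.0}. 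The paper's route, on the other hand, yields finer information about the shape of $f_{L}(\Omega^{i})$ and fits into the uniform framework it uses for all the regions $\Omega^{i}_{n}$, $\tilde\Omega^{i}_{n}$, $\Omega^{i}$.

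There is, however, a genuine gap. You flag eigenvalues of $H_{L}$ in the spectral gap just below $E_{0}$ as ``the main technical obstacle'' and then write ``Granted $\mathrm{Re}(f_{L}(z))\geq c_{3}>0$\ldots'' without resolving it. Your pointer to Lemma \ref{L:asymptoticformula} and the non-generic structure is misdirected: that lemma concerns eigenvalues \emph{inside} $B_{i}$. What you actually need is Theorem \ref{T:spectraldata}(4): any $H_{L}$-eigenvalue outside $\Sigma_{\zz}$ is exponentially close to a point $\lambda_{\infty}\in(\Sigma^{+}_{0}\cup\Sigma^{-}_{j})\setminus\Sigma_{\zz}$, hence at a fixed positive distance from $E_{0}$. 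Thus for $\ve$ small, every eigenvalue with $|\lambda_{k}-E_{0}|\leq 2\ve_{1}$ lies in $B_{i}$, and all remaining eigenvalues (other bands \emph{and} gaps) are covered by Lemma \ref{L:outsider1} with $\eta=\ve_{1}$, contributing only $O(1/(\ve_{1}L))$ to both $f_{L}$ and $f'_{L}$. This is precisely how the paper handles it (equations \eqref{eq:4.57}--\eqref{eq:4.61}). Once you insert this one sentence, both of your uniform lower bounds go through and the argument is complete.
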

 From Theorems \ref{T:abcd}-\ref{T:contour3}, we figure out that $\text{Im}z$, the width of rescaled resonances, is always bounded by $\frac{n^{2}}{\ve L}$ up to a constant factor. Hence, when $n \asymp \ve L$ (far from $\partial \Sigma_{\zz}$), $|\text{Im}z|$ is smaller than $\ve L$ as pointed out in \cite{klopp13}. However, when $n$ is small (close to $\partial \Sigma_{\zz}$), the width of rescaled resonances is much smaller. It can be smaller than $\frac{1}{L^{3}}$ up to a constant factor.
 
The basic idea to prove Theorems \ref{T:abcd}-\ref{T:contour3} is to simplify the rescaled resonance equation \eqref{eq:4.0} as much as possible via Rouch\'{e}'s theorem as we did for the generic case (c.f. \cite{phong151}). In the generic case, when we approximate the sum $S_{L}(E)$ in a region close to the eigenvalue $\ld^{i}_{n}$, we can simply keep the term $\frac{a^{i}_{n}}{\ld^{i}_{n}-E}$ and replace the remaining sum by some appropriate number which is independent of $E$. Consequently, we obtain a very simple but efficient approximation by Rouch\'{e}'s theorem. This enables us to obtain a detailed description of resonances near $\partial \Sigma_{\zz}$. However, in the present case, the situation is worse. In a domain near $\til^{i}_{n}$, if we keep the term $\frac{a^{i}_{n}}{\ld^{i}_{n}-E}$ or two terms $\frac{a^{i}_{n}}{\ld^{i}_{n}-E}$ and $\frac{a^{i}_{n+1}}{\ld^{i}_{n+1}-E}$ or even more, we can not find a suitable approximation for the sum of the other terms. We always have to deal with the situation that the error of our approximations for $f_{L}(z)$ can not be good enough to apply Rouch\'{e}'s theorem. So, the solution which we come up with is the following. We will only use Rouch\'{e}'s theorem to replace RHS of \eqref{eq:4.0} by the number $-\frac{1}{L} e^{-i\theta(E_{0})}$. Instead of approximating $f_{L}(z)$, we will study explicitly the images of domains where we want to track down resonances under $f_{L}(z)$. Then, based on the shape of the images of involved domains through $f_{L}$ as well as the relative position between them and the point $-\frac{1}{L} e^{-i\theta(E_{0})} \in \cc$, we can obtain the information on resonances.
 
Our paper is organized as follows. First of all, Section \ref{S:sd} is a recall of the behavior of spectral data of $H_{L}$ introduced in \cite{klopp13, phong151}. Next, in Section \ref{S:free}, we prove the free resonance regions in $\cD^i_n$ and $\mathcal R^i$. Finally, Section \ref{Ss:closest} is dedicated to the proofs of Theorems \ref{T:abcd}-\ref{T:contour3}.

\textbf{Notations:} Throughout the present paper, we will write $C$ for constants whose values can vary from line to line. Constants marked $C_{i}$ are fixed within a given argument. We write $a\lesssim b$ if there exists some $C>0$ independent of parameters coming into $a, b$ s.t. $a \leq Cb$. Finally, $a\asymp b$ means $a\lesssim b$ and $b\lesssim a$. 

\section{Spectral data near the boundary of $\Sigma_{\zz}$} \label{S:sd}
From \eqref{eq:resN}, resonances of $H_{L}^{\nn}$ depend only on the spectral data of the operator $H_{L}$ i.e., the eigenvalues and corresponding normalized eigenvectors of $H_{L}$. In order to "resolve" the resonance equation \eqref{eq:resN}, it is essential to understand how eigenvalues of $H_L$ behave and what the magnitudes of $a_l:=|\varphi_l(L)|^2$ are in the limit $L \rightarrow +\infty$.\\  
Before stating the properties of spectral data of $H_L$, one defines the \textit{quasi-momentum} of $H^{\mathbb Z}$:\\
Let $V$ be a periodic potential of period $p$ and $L$ be large. For $0\leq k\leq p-1$, one defines $\widetilde{T}_k=\widetilde{T}_k(E)$ to be a monodromy matrix for the periodic finite difference operators $H^{\mathbb Z}$, that is,
\begin{equation}\label{eq:monodromy}
\widetilde{T}_k(E)= T_{k+p-1, k}(E)=T_{k+p-1}(E)\ldots T_k(E)=
\begin{pmatrix}
 a^k_p(E) &a^k_p(E)\\
 a^k_{p-1}(E) &a^k_{p-1}(E)
\end{pmatrix}   
\end{equation}   
where $\{T_l(E)\}$ are transfer matrices of $H^{\mathbb Z}$: 
\begin{equation}\label{eq:transfer}
T_l(E)= 
\begin{pmatrix}
 E-V_l &-1\\
 1 &0
\end{pmatrix}.   
\end{equation}   
Besides, for $k\in \{0, \ldots, p-1\}$ we write
\begin{equation}\label{eq:product}
T_{k-1}(E\ldots T_0(E)= \begin{pmatrix}
a_k(E) & b_k(E)\\
 a_{k-1}(E) & b_{k-1}(E)
\end{pmatrix}.
\end{equation}
We observe that the coefficients of $\widetilde{T}_k(E)$ are monic polynomials in $E$. Moreover, $a^k_p(E)$ has degree $p$ and $b^k_p(E)$ has a degree $p-1$. The determinant of $T_l(E)$ equals to $1$ for any $l$, hence, $\det \widetilde{T}_k(E)=1$. Besides, $k \mapsto \widetilde{T}_k(E)$ is $p-$periodic since $V$ is a $p-$periodic potential. Moreover, for $j<k$
$$\widetilde{T}_k(E)= T_{k,j}(E) \widetilde{T}_j(E) T^{-1}_{k,j}(E).$$
Thus the discriminant $\Delta(E):= \text{tr} \widetilde{T}_k(E)=a^k_p(E)+b^k_{p-1}(E)$ is independent of $k$ and so are $\rho(E)$ and $\rho(E)^{-1}$, eigenvalues of $\widetilde{T}_k(E)$. Now, one can define the Floquet \textit{quasi-momentum} $E \mapsto \theta_p(E)$ by 
\begin{equation}\label{eq:2.5}
\Delta(E)=\rho(E)+\rho^{-1}(E)= 2\cos\left(p \theta_p(E)\right).
\end{equation}
Then, one can show that the spectrum of $H_{\nn}$, $\Sigma_{\mathbb Z}$, is the set $\{ E | |\Delta(E)|\leq 2 \}$ and $$\partial \Sigma_{\mathbb Z}=\{ E | |\Delta(E)|=2 \text{ and $\widetilde{T}_{0}(E)$ is not diagonal}\}.$$ 
Note that each point of $\partial \Sigma_{\zz}$ is a branch point of $\theta_p(E)$ of square-root type.
\begin{figure}[H]
\begin{center}
\begin{tikzpicture}[line cap=round,line join=round,x=0.67 cm,y=0.67 cm]
\clip(-2.76,-2.14) rectangle (6.47,4.08);
\draw[smooth,samples=100,domain=-2.758029237549914:6.47493127408347] plot(\x,{2*sin(((\x)+1.5)*180/pi)+0.09});
\draw (-4,2)-- (-4,0);
\draw [domain=-2.76:6.47] plot(\x,{(--2-0*\x)/2});
\draw (-4,0)-- (-4,-2);
\draw [domain=-2.76:6.47] plot(\x,{(-2-0*\x)/2});
\draw [line width=2.8pt] (-2.07,0)-- (-1.06,0);
\draw [line width=2.8pt] (1.17,0)-- (2.22,0);
\draw [line width=2.8pt] (4.21,0)-- (5.26,0);
\draw[->, >=latex] (0,-6)-- (0, 4.08);
\draw[->, >=latex](-6,0)-- (9,0);
\draw [->, >=latex] (5.26,0) -- (6.47,0);
\draw (0.06,4.21) node[anchor=north west] {$ \Delta(E) $};
\draw (5.71,0.04) node[anchor=north west] {$ E $};
\draw (0.11,1.69) node[anchor=north west] {$ 2 $};
\draw (-0.2,-0.88) node[anchor=north west] {$-2$};
\begin{scriptsize}
\draw (-3.82,-0.68) node {$E$};
\fill [] (-1.06,0) circle (1.5pt);
\draw[] (-0.86,0.33) node {$E^{+}_1$};
\fill [] (-2.07,0) circle (1.5pt);
\draw[] (-1.94,0.33) node {$E^{-}_1$};
\fill [] (1.17,0) circle (1.5pt);
\draw[] (1.36,0.33) node {$E^{-}_2$};
\fill [] (2.22,0) circle (1.5pt);
\draw[] (2.4,0.33) node {$E^{+}_2$};
\fill [] (4.21,0) circle (1.5pt);
\draw[] (4.4,0.33) node {$E^{-}_3$};
\fill [] (5.26,0) circle (1.5pt);
\draw[] (5.46,0.33) node {$E^{+}_3$};
\fill [] (0,0) circle (1.5pt);
\draw[] (0.2,0.33) node {$O$};
\end{scriptsize}
\end{tikzpicture}
\caption[Figure2]{Function $\Delta(E)$.}
\label{F:fk2}
\end{center}
\end{figure}
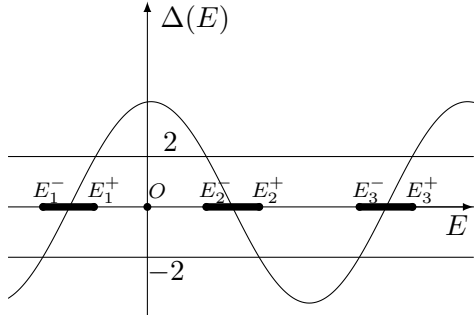
One decomposes $\Sigma_{\zz}$ into its connected components i.e.$\Sigma_{\zz}=\bigcup\limits_{i=1}^q B_i$ with $q<p$. Let $c_{i}$ be the number of closed gaps contained in $B_{i}$. Then, $\theta_{p}$ maps $B_{i}$ bijectively into $\sum\limits_{\ell=1}^{i-1} (1+c_{\ell}) \frac{\pi}{p} + \frac{\pi}{p}[0, c_{i}]$. Moreover, on this set, the derivative of $\theta_{p}$ is proportional to the common density of states $n(E)$ of $H^{\mathbb Z}$ and $H^{\mathbb N}$: 
$$ \theta'_p(E)= \pi n(E).$$
One has the following description for spectral data of $H_L$.
\begin{theorem}\cite[Theorem 4.2]{klopp13}\label{T:spectraldata}   
For any $j \in \{0, \ldots, p-1\},$ there exists $h_j: \Sigma_{\zz} \rightarrow \rr$ , a continuous function that is real analytic in a neighborhood of $\ringz$ such that, for $L=Np+j$,
\begin{enumerate}
\item The function $h_j$ maps $B_{i}$ into $\left(-(c_{i}+1)\pi, (c_{i}+1)\pi \right)$ where $c_i$ is the number of closed gaps in $B_i$;
\item the function $\theta_{p, L}= \theta_p - \frac{h_j}{L-j}$ is strictly monotonous on each band $B_{i}$ of $\Sigma_{\zz}$;
\item for $1\leq i \leq q$, the eigenvalues of $H_L$ in $B_{i}$, the $i$th band of $\Sigma_{\zz}$, says $(\lambda^{i}_k)_k$ are the solutions (in $\Sigma_{\zz}$) to the quantization condition
\begin{equation}\label{eq:quan1}
\theta_{p, L}(\lambda^i_k) =\dfrac{k \pi}{L-j}, \quad k\in \zz.
\end{equation}
\item If $\lambda$ is an eigenvalue of $H_L$ outside $\Sigma_{\zz}$ for $L=Np+j$ large, there exists $\lambda_{\infty} \in \Sigma^{+}_0 \cup \Sigma^{-}_j \backslash \Sigma_{\zz}$ s.t. $|\lambda-\lambda_{\infty}| \leq e^{-cL}$ with $c>0$ independent of $L$ and $\lambda$. 
\end{enumerate}
\end{theorem}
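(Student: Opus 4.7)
The plan is to derive the quantization condition from the transfer matrix representation of Dirichlet eigenvalues. A real number $\lambda$ is a Dirichlet eigenvalue of $H_L$ iff the recurrence \eqref{eq:9.1} admits a non-trivial sequence $(u(n))_{0 \le n \le L+1}$ with $u(0) = u(L+1) = 0$. Writing $L = Np + j$ and composing the transfer matrices in \eqref{eq:monodromy}--\eqref{eq:transfer}, the eigenvalue condition becomes a single scalar equation $\bigl[M_L(E)\,v\bigr]_2 = 0$ where $M_L(E) := \widetilde{T}_0(E)^N\, T_{j-1,0}(E)$ and $v = (1,0)^T$ encodes the left Dirichlet condition together with a choice of normalization. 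This reduces the problem to finding zeros of an entire function of $E$.

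On the interior $\ringz$ we have $|\Delta(E)| < 2$, and $\widetilde{T}_0(E)$ diagonalizes as $\widetilde{T}_0(E) = P(E)\,\mathrm{diag}(e^{ip\theta_p(E)}, e^{-ip\theta_p(E)})\,P(E)^{-1}$ with $P(E)$ real-analytic. Since $\widetilde{T}_0(E)$ is real, $P$ can be chosen so that its columns are complex conjugates of each other. Inserting this into $M_L(E)$ and extracting the second component yields an equation of the form $\alpha(E)\,e^{iNp\theta_p(E)} + \overline{\alpha(E)}\,e^{-iNp\theta_p(E)} = 0$ with $\alpha(E)$ depending only on $P(E)$ and $T_{j-1,0}(E)$; crucially, it is independent of $N$. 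Writing $\alpha = |\alpha|\,e^{i\phi_j}$ and absorbing the resulting constant phase shift, the condition becomes $Np\,\theta_p(E) + \phi_j(E) \in \pi\zz$. Dividing by $Np = L-j$ and setting $h_j := -\phi_j/p$ (with a consistent branch choice across the band) produces the claimed quantization $\theta_{p,L}(\lambda^i_k) = k\pi/(L-j)$.

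Regularity of $h_j$ follows from its construction: real-analyticity on $\ringz$ is immediate since $P$ and $T_{j-1,0}$ are real-analytic there and $|\alpha|$ is nonzero in the interior; continuous extension to $\partial \Sigma_{\zz}$ follows because even though $\theta_p$ has square-root branching at band edges, the phase $\phi_j$ itself remains continuous (the branching cancels between the two Floquet factors). The bound $|h_j| < (c_i+1)\pi$ on $B_i$ is bookkeeping: $\theta_p$ traverses an arc of length $(c_i+1)\pi/p$ across $B_i$, and the branch of $\phi_j$ can jump by at most $\pi$ across each of the $c_i$ closed gaps, giving the stated total. Strict monotonicity of $\theta_{p,L}$ on $B_i$ reduces to $\theta'_p(E) > h'_j(E)/(L-j)$: on compact sub-intervals of $\ringz$ this uses the uniform positivity of $\theta'_p = \pi n(E)$ against the bounded $h'_j$ for $L$ large, and near $\partial\Sigma_{\zz}$ one uses $\theta'_p \asymp |E - E_0|^{-1/2}$, which dominates the bounded derivative of $h_j$.

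Claim (4) about eigenvalues outside $\Sigma_{\zz}$ is a standard perturbation argument: in a gap $|\rho(E)| \neq 1$, so one Floquet solution decays exponentially. The decaying solution with $u(0) = 0$ already defines a Dirichlet eigenfunction of the half-line operator $H^{\nn}$ (hence one of the $v_i$'s in $\Sigma_{\nn}\setminus\Sigma_{\zz}$); imposing the right Dirichlet condition $u(L+1) = 0$ adds a perturbation of size $e^{-cL}$ coming from the growing mode, and the implicit function theorem yields $|\lambda - \lambda_\infty| \lesssim e^{-cL}$. The main obstacle is the delicate analysis of $\phi_j$ near $\partial \Sigma_{\zz}$: isolating the piece of the transfer matrix product that depends only on $j$ (the residue of $L$ mod $p$), and showing that the resulting phase admits a continuous extension to the branch points of $\theta_p$ where the Floquet diagonalization degenerates. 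Everything else is careful but routine linear algebra and real analysis.
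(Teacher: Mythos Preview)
This theorem is not proved in the present paper: it is quoted verbatim from \cite[Theorem~4.2]{klopp13}, as the citation in the theorem header indicates, and no proof follows in the text. So there is no ``paper's own proof'' to compare against here; your sketch is effectively a reconstruction of the argument from the cited reference.

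That said, your outline is the standard transfer-matrix derivation and is broadly correct in spirit: the Dirichlet condition on $[0,L]$ becomes a scalar equation in the $(2,1)$ entry of $\widetilde{T}_0(E)^N T_{j-1,0}(E)$, Floquet-diagonalizing $\widetilde{T}_0$ on $\ringz$ turns this into a phase condition $Np\,\theta_p(E)+\phi_j(E)\in\pi\zz$, and $h_j$ is extracted from $\phi_j$. Two points deserve caution. First, in item~(4) you identify $\lambda_\infty$ with a half-line Dirichlet eigenvalue of $H^{\nn}$, but the statement places $\lambda_\infty$ in $\Sigma_0^{+}\cup\Sigma_j^{-}\setminus\Sigma_{\zz}$; these are auxiliary periodic/antiperiodic-type spectra defined in \cite{klopp13}, not the set $\{v_i\}$ of half-line eigenvalues, and the two need not coincide. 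Second, your treatment of the band edges is the genuinely delicate step (as you acknowledge): when $|\Delta(E)|=2$ the monodromy matrix is not diagonalizable in general, the columns of $P(E)$ coalesce, and showing that $\phi_j$ nonetheless extends continuously with $h_j\in\frac{\pi}{2}\zz$ at the edge (a fact used later in the proof of Lemma~\ref{L:asymptoticformula}) requires a careful Jordan-block analysis rather than the cancellation heuristic you give. The monotonicity argument for $\theta_{p,L}$ near the edge also needs $h_j'$ to stay bounded there, which is not automatic from your construction since $P(E)$ itself has a square-root singularity.
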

When solving the equation \eqref{eq:quan1}, one has to do it for each band $B_{i}$, and for each band and each $k$ such that $\frac{k \pi}{L-j} \in \theta_{p,L}(B_{i})$, \eqref{eq:quan1} admits a unique solution. But, it may happens that one has two solutions to \eqref{eq:quan1} for a given $k$ belonging to neighboring bands.
\begin{remark} \label{R:nearboundary}
From \cite[Section 4]{klopp13}, we have the following behavior of $a_k$ associated to $\lambda_k$ which is close to $\partial \Sigma_{\zz}$.\\ Let $E_0 \in \partial \Sigma_{\zz}$ and $L=Np+j$. We define $d_{j+1}=a_{j+1}(E_0) (a^0_p(E_0) -\rho^{-1}(E_0))+b_{j+1}(E_0) a^0_{p-1}(E_0)$ where $a_{j+1}, b_{j+1}, a^0_p, a^0_{p-1}$ are polynomials defined in \eqref{eq:monodromy} and \eqref{eq:product}. Then, one distinguishes two cases:    
\begin{enumerate}
	\item If $a^0_{p-1}(E_0)=0$, then 
	$$a_k=|\varphi_k(L)|^2 \asymp \frac{|\lambda_k-E_0|}{L-j} \text{ and } |\varphi_k(0)|^2 \asymp \frac{1}{L-j}.$$
	\item If $a^0_{p-1}(E_0)\ne 0$, then
	\begin{itemize}
		\item if $d_{j+1}\ne 0$, one has 
		$$|\varphi_k(L)|^2 \asymp \frac{|\lambda_k-E_0|}{L-j} \text{ and } |\varphi_k(0)|^2 \asymp \frac{|\lambda_k-E_0|}{L-j}.$$
		\item if $d_{j+1}=0$, one has
		$$|\varphi_k(L)|^2 \asymp \frac{1}{L-j} \text{ and } |\varphi_k(0)|^2 \asymp \frac{|\lambda_k-E_0|}{L-j}.$$	
	\end{itemize}	
\end{enumerate}
Besides, according to \cite{phong151}, if $E_{0} \in \sigma(H_{L})$ for $L$ large, says $E_{0}=\ld_{k}$, then the associated $a_{k}$ is of order $\frac{1}{L}$. 
\end{remark}
Finally, we would like to remind readers of the behavior of eigenvalues of $H_L$ close to $E_0$.   
\begin{lemma} \cite[Lemma 3.5]{phong151} \label{L:asymptoticformula}
Let $E_0 \in (-2, 2)$ be the left endpoint of the $i$th band $B_i$ of $\Sigma_{\zz}$. Let 
$\lambda^{i}_0 < \lambda^{i}_1< \ldots < \lambda^{i}_{n_i}$ be eigenvalues of $H_{L}$ in $\mathring{B_i}$, the interior of $B_{i}$.\\
Pick $\ve>0$ a small, fixed number and $\ve_1\asymp \ve^2$. Let $I=I_{\ve_1}: =[E_0, E_0+\ve_1] \subset (-2,2) \cap \Sigma_{\mathbb Z}$.\\
Assume that $\lambda^{i}_k$ is an eigenvalue of $H_L$ in $I$. 
Then, $k\leq \ve (L-j)$ and $\lambda^{i}_k-E_0 \asymp \frac{(k+1)^2}{L^2}$ (for $k\geq 1$, we will write $\lambda^{i}_k-E_0 \asymp \frac{k^2}{L^2}$ instead).\\ 
Moreover, there exists $\alpha>0$ s.t. for any $0 \leq n <k \leq \ve (L-j)$, we have
\begin{equation}\label{eq:survive}
\frac{|k^2-n^2|}{\alpha L^2}  \leq |\lambda^{i}_k -\lambda^{i}_n| \leq \frac{\alpha |k^2-n^2|}{L^2}.
\end{equation}
\end{lemma}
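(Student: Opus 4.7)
My strategy is to combine the quantization condition of Theorem \ref{T:spectraldata}(3) with the square-root branching of the Floquet quasi-momentum $\theta_p$ at $E_0$. Since $E_0 \in \partial \Sigma_{\zz}$ is a branch point of square-root type for $\theta_p$, and since $E_0$ is a \emph{left} endpoint of $B_i$ with $\widetilde T_0(E_0)$ non-diagonal, as $E$ moves from $E_0$ into $B_i$ one has an expansion of the form
\begin{equation*}
\theta_p(E) - \theta_p(E_0) = c_0 \sqrt{E - E_0}\,\bigl(1 + O(E - E_0)\bigr)
\end{equation*}
for a nonzero real constant $c_0 = c_0(B_i)$. This can be read off by expanding both sides of $2\cos(p\theta_p(E)) = \Delta(E)$ at $E_0$: on the left, $\cos(p\theta_p(E_0)) = \pm 1$ forces the leading term to be quadratic in $\theta_p - \theta_p(E_0)$; on the right, $\Delta'(E_0) \neq 0$ because the gap at $E_0$ is open (equivalently, $\widetilde T_0(E_0)$ is not diagonal).

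Next I apply Theorem \ref{T:spectraldata}(3). Choosing the local enumeration so that $\ld^i_0$ is the smallest eigenvalue of $H_L$ in $B_i$, and using that $\theta_{p,L} = \theta_p - h_j/(L-j)$ with $h_j$ continuous and real-analytic in a neighborhood of $\ringz$, hence uniformly bounded on a neighborhood of $E_0$, I obtain
\begin{equation*}
\theta_p(\ld^i_k) - \theta_p(E_0) = \frac{(k+1)\pi}{L-j} + O\!\left(\frac{1}{L}\right),
\end{equation*}
the $+1$ accounting for the possibility that $\ld^i_0$ sits at or just above $E_0$, which, as recorded in Remark \ref{R:nearboundary}, can happen in the non-generic regime considered here. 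Plugging this into the square-root expansion above yields $\sqrt{\ld^i_k - E_0} \asymp (k+1)/L$, which after squaring gives the main asymptotic $\ld^i_k - E_0 \asymp (k+1)^2/L^2$. The bound $k \leq \ve(L-j)$ is then immediate from the hypothesis $\ld^i_k - E_0 \leq \ve_1 \asymp \ve^2$ after absorbing constants into $\ve$.

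For the two-sided inequality \eqref{eq:survive}, the same quantization gives $\theta_p(\ld^i_k) - \theta_p(\ld^i_n) = (k-n)\pi/(L-j) + O(1/L)$; combined with $\theta_p'(E) \asymp (E - E_0)^{-1/2}$ inside $B_i$ near $E_0$, one deduces $\sqrt{\ld^i_k - E_0} - \sqrt{\ld^i_n - E_0} \asymp (k-n)/L$, and multiplying by $\sqrt{\ld^i_k - E_0} + \sqrt{\ld^i_n - E_0} \asymp (k+n)/L$ produces $\ld^i_k - \ld^i_n \asymp (k^2 - n^2)/L^2$. The main technical obstacle is uniformity: the implicit constants in the square-root expansion of $\theta_p$ at $E_0$ must hold on the full interval $I_{\ve_1}$ (whose size is $O(\ve^2)$, hence controlled but still shrinking), and the $O(1/L)$ error coming from $h_j/(L-j)$ must remain genuinely subdominant to the leading term $(k+1)\pi/(L-j)$, which for bounded $k$ is itself only of size $O(1/L)$. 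This last delicacy for small $k$ is precisely what dictates the $(k+1)^2$ normalization rather than $k^2$ in the statement, and is why the estimate is stated in two slightly different forms depending on whether $k=0$ or $k\geq 1$.
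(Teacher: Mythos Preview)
Your overall strategy coincides with the paper's: exploit the square-root branching of the quasi-momentum at $E_0$ together with the quantization condition from Theorem~\ref{T:spectraldata}. However, your error control is too crude at two places, and both gaps concern exactly the regime you flag at the end without resolving.

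\textbf{First gap (the main asymptotic for small $k$).} You write $\theta_p(\lambda^i_k)-\theta_p(E_0)=\frac{(k+1)\pi}{L-j}+O(1/L)$ using only that $h_j$ is bounded. But for $k=0$ the main term is $\pi/(L-j)$, so an unspecified $O(1/L)$ error of the same order destroys the lower bound. Writing ``$(k+1)^2$ instead of $k^2$'' does not repair this; you need to know the implicit constant in the $O(1/L)$ is strictly below $\pi$. The paper closes this gap by working with $\theta_{p,L}$ rather than $\theta_p$ and invoking the additional input $h_j(E_0)\in\frac{\pi}{2}\zz$ (from \cite[Lemma~4.7]{klopp13}). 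This forces the exact identity $\theta_{p,L}(\lambda^i_k)-\theta_{p,L}(E_0)=\frac{(k+1)\pi+c_0}{L-j}$ with $c_0\in\{0,-\pi/2\}$, so the quantity is genuinely $\asymp(k+1)/L$ with no error term to worry about.

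\textbf{Second gap (the two-sided bound \eqref{eq:survive} for small $k-n$).} You write $\theta_p(\lambda^i_k)-\theta_p(\lambda^i_n)=\frac{(k-n)\pi}{L-j}+O(1/L)$ and then claim $x_k-x_n\asymp(k-n)/L$ where $x_\ell=\sqrt{\lambda^i_\ell-E_0}$. Again, for $k-n=1$ the error and the main term are the same size, so the lower bound does not follow. The actual error is $\frac{h_j(\lambda^i_k)-h_j(\lambda^i_n)}{L-j}$, and the paper exploits that $x\mapsto h_j(E_0+x^2)$ is analytic to see this error is $O\bigl((x_k-x_n)/L\bigr)$, hence absorbable into the left-hand side. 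Concretely, the paper expands both $\theta_p(E_0+x^2)$ and $h_j(E_0+x^2)$ to first order in $x$, solves for $x_k$ in the form $x_k=\tilde c(L)\frac{(k+1)\pi}{L-j}\cdot\frac{1}{1+x_k g_L(x_k)}$ with $g_L$ analytic and uniformly bounded, and then subtracts: the difference $x_k-x_n$ equals $\tilde c(L)\frac{(k-n)\pi}{L-j}$ times a factor $1+O(\ve)$, which gives \eqref{eq:survive} once $\ve$ is small. Your mean-value argument via $\theta_p'\asymp(E-E_0)^{-1/2}$ is morally the same, but it only goes through once you sharpen the $O(1/L)$ to $O(\ve/L)$ or $O((x_k-x_n)/L)$, which you do not do.
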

\begin{proof}[Proof of Lemma \ref{L:asymptoticformula}]
To simplify notations, we will skip the superscript $i$ in $\lambda^i_k$ of $H_L$ throughout this proof.\\ 
First of all, from the property of $\theta_{p}$ and $h_{j}$ near $E_{0}$, we have, for any $E$ near $E_{0}$, 
\begin{equation}\label{eq:114}
\theta_{p,L}(E)-\theta_{p,L}(E_{0})= c(L)\sqrt{|E-E_{0}|} \left( 1+o(1)\right) 
\end{equation}
where $|c(L)|$ is lower bounded and upper bounded by positive constants independent of $L$.\\
Put $L=Np+j$ where $p$ is the period of the potential $V$ and $0 \leq j \leq p-1$. According to Theorem \ref{T:spectraldata}, $\theta_{p,L}(E)$ is strictly monotone on $B_{i}$. W.o.l.g., we assume that $\theta_{p,L}(E)$ is strictly increasing on $B_{i}$.
Note that, in this lemma, we enumerate eigenvalues $\ld_{\ell}$ in $\mathring{B_{i}}$ with the index $\ell$ starting from $0$. Then, we have to modify the quantization condition \eqref{eq:quan1} in Theorem \ref{T:spectraldata} appropriately. Recall that the quantization condition is $\theta_{p,L}(\ld_{\ell})= \frac{\pi \ell}{L-j}$ where $\frac{\pi \ell}{L-j} \in \theta_{p,L}(B_{i})$ with $\ell \in \zz$. Assume that $\theta_{p}(E_{0})= \frac{m \pi}{p}$ with $m \in \zz$.\\ 
Put $\ell=\lambda N + \tilde{k}$ where $\lambda \in \zz$ and $0 \leq \tilde{k} \leq N-1$. 
We find $\lambda, \tilde{k}$ such that 
\begin{equation}\label{eq:20.1}
\frac{\ell \pi}{Np} - \theta_{p,L}(E_{0})= (\ld-m)\frac{\pi}{p} + \frac{\tilde{k}\pi +h_{j}(E_{0})}{Np} > 0.
\end{equation}
It is easy to see that, for $N$ large, the necessary condition is $\ld-m \geq -1$. Consider the case $\ld -m =-1$. Then, \eqref{eq:20.1} yields 
\begin{equation}\label{eq:20.2}
\tilde{k} \pi + h_{j}(E_{0}) > N \pi.
\end{equation}  
According to \cite[Lemma 4.7]{klopp13}, $h_{j}(E_{0}) \in \frac{\pi}{2} \zz$. We observe that if $h_{j}(E_{0})<0$, there does not exist $0 \leq \tilde{k} \leq N-1$ satisfying \eqref{eq:20.2}. Hence, $h_{j}(E_{0}) \in \frac{\pi}{2} \nn$. We distinguish two cases. First of all, assume that $h_{j}(E_{0}) \in \pi \nn$. Then, the first $\ell$ verifying \eqref{eq:20.1} and $\ld_{\ell} \in \ringz$ is $\ell_{0} = \frac{Np}{\pi} \theta_{p,L}(E_{0})+1$. Next, consider the case $h_{j}(E_{0}) \in \frac{\pi}{2}+ \pi \nn$. Then, the first $\ell$ chosen is $\ell_{0}=\frac{Np}{\pi} \theta_{p,L}(E_{0})+\frac{1}{2}$. Put $\ell_{k}= \ell_{0}+k$ and we associate $\ell_{k}$ to $\lambda_{k}$, the $(k+1)-$th eigenvalue in $\mathring{B_{i}}$. Then, we always have 
\begin{equation}\label{eq:shift}
\theta_{p,L}(\ld_{k})-\theta_{p,L}(E_{0})= \frac{(k+1)\pi}{L-j} + \frac{c_{0}}{L-j}
\end{equation}   
where $c_{0}=0$ if $h_{j}(E_{0}) \in \pi \zz$ and $c_{0}=-\frac{\pi}{2}$ otherwise.\\
Hence, \eqref{eq:114} and \eqref{eq:shift} yield $\ld_{k}-E_{0} \asymp \frac{(k+1)^{2}}{L^{2}}$ for all $\ld_{k} \in I$ with $\ve$ small and $L$ large. Consequently, $k \lesssim \ve (L-j)$.\\
Finally, we will prove the inequality \eqref{eq:survive}.\\
Recall that the functions $\theta_{p}(E_{0}+x^{2})$ and $h_{j}(E_{0}+x^{2})$ are analytic in $x$ on the whole band $B_{i}$. Then, we can expand these functions near $0$ to get 
$$\theta_{p}(E_{0}+x^{2}) = \theta_{p, 0}+ \theta_{p, 1}x +\theta_{p,2}x^{2} + O(x^{3}) \text{ where $\theta_{p, 0}=\theta_{p}(E_{0})$ and $\theta_{p, 1} \ne 0$}; $$
$$h_{j}(E_{0}+x^{2})= h_{j,0} +h_{j,1}x +h_{j,2}x^{2}+O(x^{3}) \text{ where $h_{j,0}=h_{j}(E_{0})$}.$$
Put $x_{k}=\sqrt{\ld_{k}-E_{0}}$. We can assume that $\theta_{p,L}$ is increasing on $B_{i}$. Then, \eqref{eq:shift} and the above expansions yield 
\begin{equation}\label{eq:300}
\theta_{p,1}(L) x_{k} + \theta_{p,2}(L) x^{2}_{k} +O(x_{k}^{3}) =\frac{(k+1) \pi}{L-j}+\frac{c_{0}}{L-j}
\end{equation}
where $\theta_{p, m}(L)=\theta_{p,m}- \frac{h_{j,m}}{L-j}$ for all $m\in \nn$, $c_{0}=0$ if $h_{j}(E_{0}) \in \pi \zz$ and $c_{0}=-\frac{\pi}{2}$ otherwise. Note that $|\theta_{p,1}(L)|$ is lower bounded and upper bounded by positive constants independent of $L$.\\
W.ol.g., assume that $c_{0}=0$. Then, we have 
\begin{equation}\label{eq:301}
x_{k}=  \tilde{c}(L) \cdot \frac{(k+1)\pi}{(L-j)} \cdot \frac{1}{1+ x_{k}g_{L}(x_{k})}  
\end{equation}
where $|\tilde{c}(L)|$ is lower bounded and upper bounded by positive constants independent of $L$. Moreover, the function $g_{L}$ is analytic near $0$; $g_L$ and its derivative are bounded near $0$ by constants independent of $L$.\\
Let $0\leq n<k \leq \ve (L-j)$, the equation \eqref{eq:301} yield 
\begin{equation}\label{eq:302}
x_{k}-x_{n}= \tilde{c}(L) \cdot \frac{\pi (k-n)}{L-j}\cdot \frac{1}{1+x_{k}g_{L}(x_{k})}+ \tilde{c}(L)\cdot \frac{n+1}{L}  \cdot \frac{x_{n}g_{L}(x_{n}) - x_{k}g_{L}(x_{k})}{\left(1+x_{k}g_{L}(x_{k}) \right) \left(1+x_{n}g_{L}(x_{n}) \right)}. 
\end{equation}
Note that the second term of the right hand side (RHS) of \eqref{eq:302} is bounded by $\ve |x_{k}-x_{n}|$ up to a constant factor. Hence, there exists a constant $C$ such that, for all $n<k\leq \ve(L-j)$, 
\begin{equation}\label{eq:303}
\frac{1}{C} \cdot \frac{k-n}{L-j} \leq |x_{k}-x_{n}| \leq C\cdot \frac{k-n}{L-j}. 
\end{equation}
On the other hand, $x_{k} \asymp \frac{k+1}{L-j}$ and $x_{n}\asymp \frac{n+1}{L-j}$. We thus have 
$|\ld_{k}-\ld_{n}|=|x^{2}_{k}-x^{2}_{n}| \asymp \frac{|k^{2}-n^{2}|}{L^{2}}$ for all $0 \leq n<k \leq \ve (L-j)$.
\end{proof}
\begin{remark}\label{R:spacing}
For $L$ large, the average  distance between two consecutive, distinct eigenvalues (the spacing) is $\frac{1}{L}$. Lemma \ref{L:asymptoticformula} says that, the spacing between eigenvalues near $\partial \Sigma_{\zz}$ is much smaller, the distance  between $\lambda^{i}_k$ and $\lambda^{i}_{k+1} \in I=[E_0, E_0+\ve_{1}]$ where $\ve_{1} \asymp \ve^{2}$ has magnitude $\frac{k+1}{L^2}$. This fact implies that the number of eigenvalues in the interval $I$ is asymptotically equal to $\ve L$ as $L \rightarrow +\infty$.
\end{remark}
\section{Resonance free regions}\label{S:free}
First of all, we state and prove the following lemma which will be useful for estimating $f_{L}(z)$.
\begin{lemma} \label{L:outsider1}
Pick $\eta>0$ and $E_{0}\in \partial \Sigma_{\zz}$. For $E\in J:=\left[E_{0}, E_{0}+\eta \right]+i \rr$, we define $z=L^{2}(E-E_{0})$ and $f_{\text{out}}(z) = \sum\limits_{|\lambda_{k}-E_{0}|> 2\eta} \frac{\tia_{k}}{\til_{k}-z}$
 Then, 
\begin{equation}\label{eq:outsider3}
|f_{\text{out}}(z)| \leq \frac{1}{\eta L} \text{ and } |\text{Im} f_{\text{out}}(z)| \leq \frac{|\text{Im} z|}{\eta^{2} L^{3}}
\end{equation}
and
\begin{equation}\label{eq:outsider4}
0< f'_{\text{out}}(z) \leq \frac{1}{\eta^{2} L^{3}} \text{ for all $E \in \left[E_{0}, E_{0} + \eta\right]$ }. 
\end{equation}
\end{lemma}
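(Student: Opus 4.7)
The plan is to rewrite $f_{\text{out}}$ in terms of the original variables $(a_k)_k$ and $(\lambda_k)_k$ and then exploit two things: the identity $\sum_k a_k = 1$ (orthonormality of the eigenvectors $\varphi_k$ of $H_L$ evaluated at the site $L$, i.e.\ Parseval applied to $\delta_L$), and the elementary geometric estimate that for any $E$ with $\text{Re}\,E\in[E_0,E_0+\eta]$ and any $\lambda_k$ with $|\lambda_k-E_0|>2\eta$, one has $|\lambda_k-E|\ge\tfrac12|\lambda_k-E_0|\ge\eta$.

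Concretely, using $\tilde{\lambda}_k-z=L^2(\lambda_k-E)$ and $\tia_k=La_k$, I would first note
\begin{equation*}
f_{\text{out}}(z)=\frac{1}{L}\sum_{|\lambda_k-E_0|>2\eta}\frac{a_k}{\lambda_k-E},\qquad f'_{\text{out}}(z)=\frac{1}{L^3}\sum_{|\lambda_k-E_0|>2\eta}\frac{a_k}{(\lambda_k-E)^2}.
\end{equation*}
For the first bound of \eqref{eq:outsider3}, I would apply the geometric estimate above and $\sum_k a_k\le 1$ to obtain
\begin{equation*}
|f_{\text{out}}(z)|\le\frac{1}{L}\sum_{|\lambda_k-E_0|>2\eta}\frac{a_k}{|\lambda_k-E|}\le\frac{2}{L}\sum_{|\lambda_k-E_0|>2\eta}\frac{a_k}{|\lambda_k-E_0|}\le\frac{2}{L}\cdot\frac{1}{2\eta}\cdot 1=\frac{1}{\eta L}.
\end{equation*}

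For the imaginary-part estimate, since $\lambda_k\in\mathbb{R}$ one has $\text{Im}\,\bigl((\lambda_k-E)^{-1}\bigr)=\text{Im}\,E/|\lambda_k-E|^2$, and $\text{Im}\,E=\text{Im}\,z/L^2$, so
\begin{equation*}
|\text{Im}\,f_{\text{out}}(z)|\le\frac{|\text{Im}\,z|}{L^3}\sum_{|\lambda_k-E_0|>2\eta}\frac{a_k}{|\lambda_k-E|^2}\le\frac{|\text{Im}\,z|}{L^3}\cdot\frac{4}{(2\eta)^2}\cdot 1=\frac{|\text{Im}\,z|}{\eta^2 L^3}.
\end{equation*}
For \eqref{eq:outsider4}, when $E\in[E_0,E_0+\eta]$ is real, each summand $a_k/(\lambda_k-E)^2$ is strictly positive (at least one $a_k>0$ in the relevant range since $L$ is large, giving positivity of $f'_{\text{out}}$), and the same geometric estimate yields the upper bound $1/(\eta^2L^3)$.

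There is no real obstacle here; the only point that needs a word of care is the passage $|\lambda_k-E|\ge\tfrac12|\lambda_k-E_0|$, which requires only $\text{Re}\,E\in[E_0,E_0+\eta]$ and $|\lambda_k-E_0|>2\eta$ (the imaginary part of $E$ only helps, since $|\lambda_k-E|^2=(\lambda_k-\text{Re}\,E)^2+(\text{Im}\,E)^2$). Thus all three estimates follow uniformly in $\text{Im}\,E$, as required.
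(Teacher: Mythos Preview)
Your proof is correct and takes essentially the same approach as the paper's: both rest on the single geometric observation that $|\lambda_k-E|>\eta$ (equivalently $|\tilde\lambda_k-z|>\eta L^2$) whenever $|\lambda_k-E_0|>2\eta$ and $\text{Re}\,E\in[E_0,E_0+\eta]$, combined with $\sum_k a_k=1$. The only difference is cosmetic: you unwind the rescaling and work with $(a_k,\lambda_k,E)$, while the paper stays in the rescaled variables $(\tilde a_k,\tilde\lambda_k,z)$ and simply notes $|\tilde\lambda_k-z|>\eta L^2$.
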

\begin{proof}[Proof of Lemma \ref{L:outsider1}] 
Note that $|\til_{k}-z| > \eta L^{2}$ for all $|\lambda_{k}-E_{0}|>2\eta$ and $E\in J$. On the other hand, 
$\text{Im} f_{\text{out}}(z) = \text{Im}z \sum\limits_{|\lambda_{k}-E_{0}|>2\eta} \frac{\tia_{k}}{|\til_{k}-z|^{2}}$ and 
$f'_{\text{out}}(z)=\sum\limits_{|\lambda_{k}-E_{0}|>2\eta} \frac{\tia_{k}}{(\til_{k}-z)^{2}}$. Hence, the claim follows.
\end{proof}
\subsection{Near the poles of \lowercase{$f_{\uppercase{l}}(z)$ }}\label{Ss:nearpoles}
Let $\lambda^{i}_0 < \lambda^{i}_1< \ldots < \lambda^{i}_{n_i}$ be eigenvalues of $H_{L}$ in $B_{i}$. With that enumeration, it is possible that $\ld^i_0 =E_0$ for all $L$ large. Note that if that case happens, the formula \eqref{eq:survive} in Lemma \ref{L:asymptoticformula} still holds for any pair $\ld^{i}_{0}, \ld^{i}_{k}$ with $0< k \lesssim \ve L$. In fact, $E_{0}$ is an eigenvalue of $H_{L}$ or not will not affect our results at all.\\
For each $0 \leq n \leq \ve L/C_{1}$ with $C_{1}>0$ large, the rectangle $\cD_n$ contains $\til_n, \til_{n+1}$, two poles of the meromorphic function $f_L(z)$. Since the modulus of $f_L(z)$ is big near these points, there are no resonances in those regions. Following is a quantitative version of this observation. 
\begin{lemma} \label{L:modulus} Let $E_0 \in (-2, 2)$ be the left endpoint of the band $B_{i}$ of $\Sigma_{\zz}$. Assume that 
$(\lambda^{i}_{\ell})_{\ell}$ with $0\leq \ell \leq n_{i}$ are (distinct) eigenvalues of $H_{L}$ in $B_{i}$.
Put $I=\left[E_0, E_0+\ve_{1} \right] \subset B_{i}$ where $\ve_{1} \asymp \ve^{2}$ with $\ve>0$ small. 
For each $0 \leq n \leq \ve L/C_1$ with $C_1>0$ large, we define
\begin{equation}\label{eq:define}
f_{n,L}(z):=\frac{\tia^{i}_n}{\til^{i}_n -z}+\frac{\tia^{i}_{n+1}}{\til^{i}_{n+1} -z}; \quad \tilde{f}_{n,L}(z):=f_L(z)-f_{n,L}(z)
\end{equation}
where $z=L^2(E-E_0)$ with $E\in I -i[0, \ve^{5}]$;\\ 
$\Delta_n:=\frac{(n+1)}{\kappa(\ln (n+1)+1)}$ where $\kappa$ is a large constant.\\
Then, 
\begin{itemize}
\item[$\bullet$] $|\tilde{f}_{n,L}(z)| \lesssim \frac{\ln (n+1)+1}{n+1} $ for all $z \in [\til^{i}_n, \til^{i}_{n+1}]+i\rr$,
\item[$\bullet$] $\tilde{f}'_{n,L}(z) \asymp \frac{1}{(n+1)^2}$ if $z$ is real and $z\in [\til^{i}_n, \til^{i}_{n+1}] $,
\item[$\bullet$] $|Im \tilde{f}_{n,L}(z)| \lesssim \frac{|Im z|}{n^2}$.
\end{itemize}
Consequently, for all $z \in  \left( [\til^{i}_n, \til^{i}_n\pm \Delta_n] \cap [0, \ve_1 L^2] \right) -i[0, \Delta_n]$, 
\begin{equation}\label{eq:4.-2}
|f_L(z)| \gtrsim \frac{1}{\Delta_n}\gtrsim \frac{1}{\ve L}.
\end{equation}
Note that, in the definition of $\Delta_n$, we choose $\kappa$ to be large so that $\til^{i}_{n}-\Delta_n>0$. Besides, $[\til^{i}_n, \til^{i}_n\pm \Delta_n]$ always belongs to $[0, \ve_{1}L^{2}]$ unless $n=0$ and $\til^{i}_{0}=0$ i.e. $E_{0} \in \sigma(H_{L})$ for any $L$ large.
\end{lemma}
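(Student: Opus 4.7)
My plan is to decompose $\tilde f_{n,L}(z)$ into a local part, collecting contributions from eigenvalues of $H_L$ lying in $B_i$ close to $E_0$, and an outer part for everything else, and then exploit that the outer part is negligible on the scales of interest. Concretely, fix a small constant $\eta>0$ (independent of $L$) and write $\tilde f_{n,L}=f_{\text{loc}}+f_{\text{out}}$, where $f_{\text{loc}}$ collects those $\ld^i_k\in B_i$ with $|\ld^i_k-E_0|\leq 2\eta$ and $k\ne n,n+1$, while $f_{\text{out}}$ gathers the rest. Since the band structure is fixed, $\eta$ can be chosen so small that no eigenvalue from a neighboring band enters the local range, and Lemma~\ref{L:outsider1} then controls $f_{\text{out}}$, its imaginary part, and its derivative on the region at hand by $O(1/L)$, $O(|\text{Im}\,z|/L^3)$ and $O(1/L^3)$ respectively; for $n\lesssim \ve L$ these are dominated by the target bounds.

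For $f_{\text{loc}}$, the key input is Lemma~\ref{L:asymptoticformula}, which gives $\til^i_k\asymp (k+1)^2$ and, for $z$ with $\text{Re}\,z\in[\til^i_n,\til^i_{n+1}]$, $|\til^i_k-z|\gtrsim n^2-k^2$ when $k<n$ and $|\til^i_k-z|\gtrsim k^2-(n+1)^2$ when $k>n+1$. Using $\tia^i_k\asymp 1$ (the non-generic hypothesis) and the partial fraction identity $(n^2-k^2)^{-1}=(2n)^{-1}\!\left[(n-k)^{-1}+(n+k)^{-1}\right]$, a short harmonic-sum computation gives
\[\sum_{k<n}\frac{1}{n^2-k^2}\asymp\frac{\ln(n+1)}{n+1},\qquad\sum_{k>n+1}\frac{1}{k^2-(n+1)^2}\lesssim\frac{\ln(n+1)+1}{n+1},\]
proving the first bullet. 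Squaring this argument bounds $\sum(n^2-k^2)^{-2}$ and $\sum(k^2-(n+1)^2)^{-2}$ by $O(1/(n+1)^2)$, yielding the upper bound on $\tilde f'_{n,L}$; keeping only the single nearest term (say $k=n-1$, or $k=n+2$ when $n=0$) supplies the matching lower bound and so the second bullet. The third bullet follows from the identity $\text{Im}\,\tilde f_{n,L}(z)=\text{Im}\,z\sum_{k\ne n,n+1}\tia^i_k/|\til^i_k-z|^2$ combined with $|\til^i_k-z|\geq|\til^i_k-\text{Re}\,z|$ and the derivative upper bound just established.

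To deduce \eqref{eq:4.-2}, let $z$ lie in the prescribed rectangle abutting $\til^i_n$; then $|\til^i_n-z|\leq\sqrt 2\,\Delta_n$, so the main pole contributes $|\tia^i_n/(\til^i_n-z)|\gtrsim 1/\Delta_n$. Since $\Delta_n\leq(n+1)/\kappa$, for $\kappa$ large we have $|\til^i_{n+1}-z|\geq\til^i_{n+1}-\til^i_n-\Delta_n\asymp n+1$, so the companion pole contributes only $O(1/(n+1))$. Combining with the first bullet and writing $1/\Delta_n=\kappa(\ln(n+1)+1)/(n+1)$, the triangle inequality gives
\[|f_L(z)|\geq \frac{c}{\Delta_n}-\frac{C(\ln(n+1)+1)}{n+1}=\frac{1}{\Delta_n}\left(c-\frac{C}{\kappa}\right),\]
so taking $\kappa$ sufficiently large leaves $|f_L(z)|\gtrsim 1/\Delta_n$, and then $1/\Delta_n\gtrsim 1/(\ve L)$ follows from $n\leq \ve L/C_1$.

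The main technical nuisance is the uniformity of the harmonic-sum estimates over the combined range $0\leq n\lesssim \ve L$ and, in particular, ensuring that the boundary case $n=0$ with $\til^i_0=0$ (which, by the remark opening Section~\ref{Ss:nearpoles}, can occur when $E_0\in\sigma(H_L)$) is covered: Lemma~\ref{L:asymptoticformula} still governs the pairs $(\til^i_0,\til^i_k)$ with $k\geq 1$, and in the lower bound on $\tilde f'_{n,L}$ one must use the term $k=n+2$ in place of $k=n-1$.
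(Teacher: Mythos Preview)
Your proposal is correct and follows essentially the same route as the paper: split off the far-away eigenvalues via Lemma~\ref{L:outsider1}, estimate the local sums $\sum_{k<n}$ and $\sum_{k>n+1}$ through the spacing $|\til^i_k-\til^i_n|\asymp|k^2-n^2|$ from Lemma~\ref{L:asymptoticformula}, and conclude \eqref{eq:4.-2} by a triangle-inequality comparison of the dominant pole against the $O((\ln(n+1)+1)/(n+1))$ remainder. The only cosmetic differences are that the paper takes the cut-off at $2\ve_1$ rather than at a fixed $\eta$, and obtains the lower bound on $\tilde f'_{n,L}$ by summing all terms $k<n$ rather than keeping the single nearest one; neither changes the argument.
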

\begin{proof}[Proof of Lemma \ref{L:modulus}]
We can choose $C_{1}>0$ large enough such that $\ld^{i}_{n} < E_0+\ve_1$ and $\lambda^{i}_{k}>E_{0}+2\ve_{1}$ if $k>\ve L$ and $\lambda^{i}_{k} \in B_{i}$. Then, Lemma \ref{L:outsider1} yield
\begin{equation}\label{eq:4.60}
\sum_{\ld_{k} \notin [E_{0}, E_{0}+2\ve_{1}]}\frac{\tia_k}{|\til_k -z|} \leq  \frac{1}{\ve_{1} L} \lesssim \frac{\ln (n+1)+1}{n+1}.
\end{equation}
Hence, it suffices to prove the same bound for the sum $S$ where 
\begin{equation*}\label{eq:4.1}
S=\sum_{k=0}^{n-1} \frac{\tia^{i}_k}{\til^{i}_k -z} + \sum_{k=n+2}^{\ve L} \frac{\tia^{i}_k}{\til^{i}_k -z}=:S_1+S_2.
\end{equation*}
Throughout the rest of the proof, we will omit the superscript $i$ to lighten the notation.\\
Recall that, by Lemma \ref{L:asymptoticformula}, $|\til_{k}-\til_{n}| \asymp |k^{2}-n^{2}|$ for all $k \ne n \in \left[0, \ve L/C_1\right]$. Hence,
\begin{align}\label{eq:115}
|S_1| &\leq  \sum_{k=0}^{n-1}\frac{\tia_k}{|\til_k -z|} \leq  \sum_{k=0}^{n-1}\frac{\tia_k}{|\til_n -\til_k|}\lesssim
\sum_{k=0}^{n-1}\frac{1}{(n-k)(n+k)}\notag\\ 
&\lesssim \sum_{k=1}^n \frac{1}{k(2n-k)} \lesssim \frac{\ln (n+1) +1}{n+1}.
\end{align}
Next, we will estimate the sum $S_2$.
\begin{align} \label{eq:farboundary}
|S_2|& \leq \sum_{k= n+2}^{\ve L}\frac{\tia_k}{|\til_k -z|}\lesssim \sum_{k\geq n+2} \frac{1}{k^2-(n+1)^2} \lesssim \sum_{k\geq 1} \frac{1}{k(k+2n+2)} \notag\\
& \lesssim  \frac{1}{2n+2}\sum_{k=1}^{2n+2}\frac{1}{k}\lesssim \frac{\ln (n+1)+1}{n+1}.
\end{align}
Hence, \eqref{eq:4.60}-\eqref{eq:farboundary} yield $| \tilde{f}_{n,L}(z)| \lesssim \frac{\ln (n+1)+1}{n+1}$.\\
Now, we will prove the second item of Lemma \ref{L:modulus}. Assume that $z$ is real and $z\in [\til_n, \til_{n+1}]$. Then, by Lemma \ref{L:outsider1}, we have
\begin{align}\label{eq:-1}
\tilde{f}'_{n,L}(z) &\leq \sum_{\substack{k\leq \ve L\\ k\ne n, n+1} }\frac{\tilde{a}_k}{(\til_k-z)^2}+ \frac{1}{\ve_1^2 L^3} \notag \\
&\leq  \sum_{k=0}^{n-1}\frac{\tilde{a}_k}{(\til_n-\til_k)^2}+\sum_{k=n+2}^{\ve L}\frac{\tilde{a}_k}{(\til_k-\til_{n+1})^2}
+ \frac{1}{\ve_1^2 L^3} \\ 
&\lesssim \frac{1}{(n+1)^2}+ \frac{1}{\ve_1^2 L^3} \lesssim \frac{1}{(n+1)^2}. \notag
\end{align}
On the other hand, for $z \in [\til_n, \til_{n+1}]$ and $n\geq 1$, we have
\begin{align}\label{eq:-2}
\tilde{f}'_{n,L}(z) &\geq \sum_{k=0}^{n-1}\frac{\tilde{a}_k}{(\til_n-\til_k)^2}\gtrsim  \sum_{k=0}^{n-1} \frac{1}{(n-k)^2(n+k)^2}\notag\\ 
&\gtrsim \sum_{k=1}^n \frac{1}{k^2(2n-k)^2}\gtrsim \frac{1}{n^2} \sum_{k=1}^n\frac{1}{k^2} \gtrsim \frac{1}{n^2}.
\end{align}
Moreover, if $z\in [\til_0, \til_1]$, it's easy to see that 
\begin{equation}\label{eq:-3}
\tilde{f}'_{n,L}(z) \geq \frac{\tilde{a}_2}{(\til_2-\til_0)^2}\gtrsim 1.
\end{equation} 
Thanks to \eqref{eq:-1}-\eqref{eq:-3}, we infer that that $\tilde{f}'_{n,L}(z) \asymp \frac{1}{(n+1)^2} \; \text{ for all $\til_n \leq z \leq \til_{n+1}$}$.\\
 Consequently, for $z\in [\til_n, \til_{n+1}]+i \rr$,
 \begin{align}\label{eq:4.3}
|\text{Im} \tilde{f}_{n,L}(z)| \leq |\text{Im}z| \tilde{f}'_{n,L}(\text{Re}z) \lesssim \frac{|\text{Im} z|}{(n+1)^2}.
\end{align}
Finally, consider $z$ which belongs to the square $[\til_n, \til_n+\Delta_n] -i[0, \Delta_n]$ or $[\til_{n+1}-\Delta_n, \til_{n+1}] -i[0, \Delta_n]$. W.o.l.g., assume that $z \in [\til_n, \til_n+\Delta_n] -i[0, \Delta_n]$.\\ 
Then,  there exists $C>0$ such that
\begin{align}
|f_L(z)| &\geq \frac{\tia_n}{|\til_n -z|}-\frac{\tia_{n+1}}{|\til_{n+1} -z|}-|\tilde{f}_{n,L}(z)| \geq \frac{1}{C \Delta_n}-\frac{C}{n}-\beta\frac{\ln n+1}{n}\gtrsim \frac{1}{\Delta_n}
\end{align}
if the constant $\kappa$ in the definition of $\Delta_n$ is chosen to be large.
\end{proof}

\subsection{Large imaginary part}\label{Ss:large part}
For each $n$, another region no containing resonances can be obtained from an estimate on $\text{Im}f_L(z)$. Contrary to the generic case, when $z$ is not too close to the real axis, $|\text{Im}f_L(z)|$ becomes large instead of being small w.r.t. $\left|\frac{1}{L}\text{Im} \left(e^{-i \theta(E)}\right) \right|$. Consequently, there are no resonances. 
\begin{lemma}\label{L:Imf}
We assume the same hypotheses and notations in Lemma \ref{L:modulus} and put $x_0:=L^2(\lambda^{i}_{n+1}-\lambda^{i}_n) \asymp 2n+1$.\\
 Then,  for $1\leq n \leq \ve L/C_{1}$, we have $\left |\text{Im}f_L(z)\right| \gtrsim \frac{1}{\ve L}$ for all $ \frac{x_0^2}{\ve L}\leq  |\text{Im} z| \leq \ve^5L^2$.\\
 Besides, the above statement still holds in the region $[0, \til^{i}_1]-i \left[\frac{1}{\ve L}, \varepsilon^5 L^2\right]$. 
\end{lemma}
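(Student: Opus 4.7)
The plan is to use the identity
$$|\text{Im}\, f_L(z)| \;=\; |\text{Im}\, z| \sum_{k} \frac{\tia_k}{|\til_k - z|^2},$$
whose right-hand side is a sum of positive terms, so any partial sum is a lower bound on the whole. Throughout I would invoke Lemma \ref{L:asymptoticformula} (giving $\til^i_k \asymp (k+1)^2$ and $|\til^i_k - \til^i_\ell| \asymp |k^2 - \ell^2|$ for $k,\ell \lesssim \varepsilon L$) together with the non-generic assumption $\tia^i_k \asymp 1$ on that same index range. Write $z = x + iy$ with $x \in [\til^i_n, \til^i_{n+1}]$, so that $x \asymp n^2$ and $x_0 \asymp 2n+1$.

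I would then split on the size of $|y|$ relative to the local spacing $x_0$. \textbf{Case 1:} $\frac{x_0^2}{\varepsilon L} \le |y| \le x_0$. Keeping only the $k=n$ term and using $|\til^i_n - z|^2 \le x_0^2 + y^2 \lesssim x_0^2$ together with $\tia^i_n \asymp 1$ gives
$$|\text{Im}\, f_L(z)| \;\ge\; \frac{|y|\,\tia^i_n}{|\til^i_n - z|^2} \;\gtrsim\; \frac{|y|}{x_0^2} \;\ge\; \frac{1}{\varepsilon L},$$
the last step being the hypothesis on $|y|$; this range is nonempty exactly because $n \lesssim \varepsilon L$.

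\textbf{Case 2:} $x_0 < |y| \le \varepsilon^5 L^2$. Now many poles lie within distance $|y|$ of $z$. Counting indices with $|\til^i_k - x| \le |y|$, i.e.\ $|k-n|(k+n) \lesssim |y|$, I would distinguish two regimes. If $|y| \le n^2$, the count is $\asymp |y|/n$ and each selected term contributes $\asymp |y|^{-2}$, so after multiplying by $|y|$ one gets $|\text{Im}\, f_L(z)| \gtrsim 1/n \ge 1/(\varepsilon L)$. If $|y| > n^2$, the count is $\asymp \sqrt{|y|}$ (over $0 \le k \lesssim \sqrt{|y|}$), yielding $|\text{Im}\, f_L(z)| \gtrsim 1/\sqrt{|y|} \ge 1/(\varepsilon^{5/2} L) \gtrsim 1/(\varepsilon L)$ for $\varepsilon$ small; the upper bound $|y| \le \varepsilon^5 L^2$ is used precisely so that $\sqrt{|y|} \le \varepsilon^{5/2} L \le \varepsilon L$ keeps the counted indices inside the range where $\tia^i_k \asymp 1$ holds.

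For the second part of the lemma, with $\text{Re}\, z \in [0, \til^i_1]$ and $|y| \in [\tfrac{1}{\varepsilon L}, \varepsilon^5 L^2]$, I would rerun the identical argument with $n = 0$ and $x_0 \asymp 1$: for $|y| \le 1$ the single term at $k=1$ gives $|\text{Im}\, f_L(z)| \gtrsim |y| \ge 1/(\varepsilon L)$; for $|y| > 1$ the $|y| > n^2$ subcase of Case~2 applies verbatim. The main technical point, and the only place that requires real care, is the eigenvalue count in the regime $|y| > n^2$: the square-root spacing $\til^i_k \asymp k^2$ (rather than uniform spacing) is decisive there, and one has to verify via $|y| \le \varepsilon^5 L^2$ that the counted indices stay within the validity range $k \le \varepsilon L$ of Lemma \ref{L:asymptoticformula} and of the non-generic estimate $\tia^i_k \asymp 1$.
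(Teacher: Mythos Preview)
Your proof is correct and follows the same overall strategy as the paper's: exploit the positivity identity $|\text{Im}\,f_L(z)|=|\text{Im}\,z|\sum_k \tia_k/|\til_k-z|^2$, keep a single nearest pole when $|\text{Im}\,z|$ is small, and gather many poles when $|\text{Im}\,z|$ is large. The execution, however, is genuinely different and cleaner.

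The paper splits at $|\text{Im}\,z|=\varepsilon L$. Below that threshold it uses the single-term bound $\frac{|\text{Im}\,z|}{x_0^2+|\text{Im}\,z|^2}\gtrsim \frac{1}{\varepsilon L}$; above it, the paper passes to integrals via $\sum\gtrsim\int$, performs the change of variables $t=y^{1/4}u$, and then has to treat separately the subcases $n\lesssim\sqrt{\varepsilon L}$ and $n\gtrsim\sqrt{\varepsilon L}$ (the latter requiring yet another integral estimate over $k<n$ to cover the window $\varepsilon L\le|\text{Im}\,z|\le n^2/\alpha$). Your argument avoids all of this machinery: you split instead at $|\text{Im}\,z|=x_0$ and at $|\text{Im}\,z|=n^2$, and in the multi-pole regime you simply \emph{count} indices $k$ with $|\til^i_k-x|\lesssim|\text{Im}\,z|$ using the spacing $|\til^i_k-\til^i_n|\asymp|k-n|(k+n)$. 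That yields the same bounds ($\gtrsim 1/n$ when $|\text{Im}\,z|\le n^2$, $\gtrsim 1/\sqrt{|\text{Im}\,z|}$ otherwise) without any integral comparison, and the role of the hypothesis $|\text{Im}\,z|\le\varepsilon^5L^2$ becomes transparent: it keeps the counted indices inside the range $k\lesssim\varepsilon L$ where $\tia^i_k\asymp 1$ and Lemma~\ref{L:asymptoticformula} apply. Your treatment of the region $[0,\til^i_1]$ is likewise more direct than the paper's separate integral computation there.
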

\begin{proof}[Proof of Lemma \ref{L:Imf}]
Throughout the proof, we will skip all superscript $i$ in $\til^{i}_{n}, \tia^{i}_{n}$ associated to eigenvalues in $B_{i}$.\\
First of all, we have 
\begin{align}\label{eq:3.35}
\left |\text{Im}f_L(z)\right| \geq \dfrac{\tia_n |\text{Im} z|}{x^2+|\text{Im} z|^2}+\dfrac{\tia_{n+1}|\text{Im} z|}{(x_0-x)^2+|\text{Im} z|^2}+\sum^{\ve L}_{\substack{k=0 \\ k\ne n, n+1}} \dfrac{\tilde{a}_k  |\text{Im} z|}{(\til_k-\text{Re}z)^2+|\text{Im} z|^2}
\end{align}
where $x:=\text{Re}z-\til_n$.\\
Hence,  
\begin{equation}\label{eq:3.41}
\left |\text{Im}f_L(z) \right| \gtrsim \dfrac{ |\text{Im} z|}{x_0^2+|\text{Im} z|^2} \gtrsim \frac{1}{\ve L}\cdot \frac{1}{1+\frac{x_0^2}{\ve^2 L^2}} \gtrsim \frac{1}{\ve L} 
\end{equation}
for all $\frac{x_0^2}{\ve L} \leq |\text{Im}z| \leq \ve L$.\\
Now, assume that $\ve L \leq |\text{Im} z|\leq \varepsilon^5 L^2$, we will find a good lower bound for the last term of RHS of \eqref{eq:3.35}.
We compute
\begin{align}\label{eq:3.39}
A:&= \sum^{\ve L}_{\substack{k=0 \\ k\ne n, n+1}} \dfrac{\tilde{a}_k  |\text{Im} z|}{(\til_k-\text{Re}z)^2+|\text{Im} z|^2}  \notag\\
&= \sum_{k=0}^{n-1} \dfrac{\tia_k |\text{Im} z|}{(\til_k-\text{Re}z)^2+|\text{Im} z|^2} + \sum_{k= n+2}^{\ve L} \dfrac{\tia_k |\text{Im} z|}{(\til_k-\text{Re}z)^2+|\text{Im} z|^2}\\
& \geq \sum_{k= n+2}^{\ve L} \dfrac{\tia_k |\text{Im} z|}{|\text{Im} z|^2+C (k-n)^2(k+n)^2}\notag\\
&\gtrsim \sum_{k= 2}^{\ve L/2}\dfrac{ |\text{Im} z|}{C k^2(k+2n)^2+|\text{Im} z|^2}\gtrsim  y^{1/2} \int_{2}^{\frac{1}{2}\ve L}\dfrac{dt}{C t^2(t+2n)^2+y}\notag
\end{align}
where $y:=|\text{Im} z|^2 \geq \ve^2 L^2 \gg 1 $.\\
Let's assume that $n\geq 1$. By the change of variables $t=y^{1/4}u$, we have 
\begin{align*}
B:=\int_{2}^{\frac{1}{2}\ve L}\dfrac{dt}{t^2(t+2n)^2+y}= y^{-3/4}\int_{2y^{-1/4}}^{\frac{1}{2}\ve L y^{-1/4}}\dfrac{du}{C u^2(u+2ny^{-1/4})^2+1}.
\end{align*}
Note that $\ve L y^{-1/4}= \frac{\ve L}{\sqrt{|\text{Im}z|}} \geq \frac{\ve L}{\ve^2 L}=\frac{1}{\ve}$ for all $|\text{Im} z|\leq \varepsilon^5 L^2$.
Hence, for $2 \ve < 10^{-3}$, we have 
\begin{align}\label{eq:3.40}
A&\gtrsim y^{-1/4}\int_{2}^{1000}\dfrac{du}{u^2(u+2ny^{-1/4})^2+1}\notag\\
&\gtrsim \dfrac{1}{\sqrt{|\text{Im} z|}}\int_{2}^{1000}\dfrac{du}{C u^2(u+2ny^{-1/4})^2+1}.
\end{align}
We observe that, if $\dfrac{n}{y^{1/4}}=\dfrac{n}{\sqrt{|\text{Im} z|}}$ is smaller than a positive constant, say $\alpha$ i.e., $|\text{Im} z| \geq n^2/\alpha$, the above integral is lower bounded by a positive constant $C_{\alpha}$. Then, 
$$A\gtrsim \dfrac{C_{\alpha}}{\sqrt{|\text{Im} z|}}\gtrsim  \frac{1}{\ve^2 L}  \text{ when } \frac{n^2}{\alpha}\leq |Im z|\leq \ve^5 L^2.$$ 
Note that, if $\ve L \geq \frac{n^2}{\alpha}$ i.e., $n \lesssim \sqrt{\ve L}$, the above inequality holds true for all $\ve L \leq |\text{Im}z| \leq \ve^5 L^2$.\\
Finally, we consider the case $n\gtrsim \sqrt{\ve L}$ and find a lower bound for $|\text{Im} f_L(z)|$ in the domain $\ve L\leq  |\text{Im} z| \leq \frac{n^2}{\alpha}$ where $\alpha$ is a large, fixed constant.\\
Thanks to the first inequality in (\ref{eq:3.39}), we have 
\begin{align}\label{eq:3.100}
A& \gtrsim  \sum_{k=0}^{n-1} \dfrac{|\text{Im} z|}{(\til_k-\text{Re}z)^2+|\text{Im} z|^2} \gtrsim \sum_{k=0}^{n-1} \dfrac{|\text{Im} z|}{(\til_{n+1}-\til_k)^2+|\text{Im} z|^2} \notag\\ 
&\gtrsim 
 \sum_{k=0}^{n-1} \dfrac{|\text{Im} z|}{(n+1 -k)^2(n+1+k)^2 +|\text{Im} z|^2}\\
&\gtrsim \sum_{k=2}^{n+1}\dfrac{|\text{Im} z|}{(2n+2-k)^2 k^2+|\text{Im} z|^2} \gtrsim  \sum_{k=2}^{n+1}\dfrac{|\text{Im} z|}{n^2 k^2+|\text{Im} z|^2}\notag\\
&=\dfrac{y_1}{n}\sum_{k=2}^{n+1}\dfrac{1}{k^2+y_1^2}\gtrsim \dfrac{y_1}{n}\int_2^{n+2} \dfrac{dt}{t^2+y_1^2}\notag
\end{align}
where $1 \leq \frac{\ve L}{n}\leq y_1:=\frac{|\text{Im} z|}{n}\leq \frac{n}{\alpha}$. Here, we choose $\alpha \geq 3$. Then, by the change of variables $t:=y_1 u$, we have 
\begin{equation}\label{eq:3.42}
A \gtrsim  \dfrac{1}{n} \int_{2/y_1}^{(n+2)/y_1}\dfrac{du}{Cu^2+1} \gtrsim  \dfrac{1}{n} \int_{2}^{\alpha}\dfrac{du}{u^2+1}\gtrsim \dfrac{1}{n}\gtrsim \dfrac{1}{\varepsilon L}
\end{equation}
 for all $\ve L \leq |\text{Im} z|\leq \frac{n^2}{\alpha}$.\\
Thanks to (\ref{eq:3.40})-(\ref{eq:3.42}), we conclude that $\left |\text{Im}f_L(z)\right|\geq \dfrac{C}{\varepsilon L}$ with $\ve L \leq |\text{Im} z|\leq \varepsilon^5 L^2$ for all $n\geq 1$.\\ 
Now, we consider the case $\text{Re}z\in [0, \til_1]$. For all $1\leq |\text{Im} z|\leq \varepsilon^5 L^2$, we proceed as in \eqref{eq:3.39} and \eqref{eq:3.40} to get
\begin{align}\label{eq:3.52}
&\left |\text{Im}f_L(z)\right|\geq c_0\sum_{k= 2}^{\ve L}\dfrac{|\text{Im} z|}{\alpha^2 k^4+|\text{Im} z|^2}
 \geq c_0 \sum_{k=2}^{\ve L}\dfrac{|\text{Im} z|}{\alpha^2 k^4+|\text{Im} z|^2}\\
 &\geq  \dfrac{c_0}{\sqrt{|\text{Im} z|}}\int_{\frac{2}{\sqrt{|\text{Im} z|}}}^{\frac{\ve L}{\sqrt{|\text{Im} z|}}}\dfrac{du}{\alpha^2 u^4+1}\geq \dfrac{c_0}{\sqrt{|\text{Im} z|}}\int_{2}^{10000}\dfrac{du}{\alpha^2  u^4+1} \gtrsim \dfrac{1}{\varepsilon^2 L}\notag.
\end{align}
On the other hand, for $0<|\text{Im} z|<1$, by putting $t=\frac{1}{\sqrt{|\text{Im} z|}}\geq 1$, we have
\begin{equation}\label{eq:3.50}
\left |\text{Im}f_L(z)\right|\geq t \int_{2t}^{\ve L t}\dfrac{du}{Cu^4+1}\geq \dfrac{t}{C} \int_{2t}^{\ve L t}\dfrac{du}{u^4}\gtrsim \frac{1}{t^2} \gtrsim  |\text{Im} z|.
\end{equation}
Thanks to \eqref{eq:3.52} and \eqref{eq:3.50}, $\left |\text{Im}f_L(z)\right|\gtrsim \dfrac{1}{\ve L}$ for all $\frac{1}{\varepsilon L}\leq |\text{Im} z| \leq \ve^5 L^2$ and $\text{Re}z\in [0, \til_1]$. Hence, the claim follows.
\end{proof}
Thanks to Lemmata \ref{L:modulus} and \ref{L:Imf}, we obtain free resonance regions illustrated in Figures \ref{F:f1}-\ref{F:f4}.
\section[Resonances closest to the real axis]{Resonances closest to the real axis}\label{Ss:closest}
In the present section, for each band $B_{i}$ of $\Sigma_{\zz}$, we will study rescaled resonances in $\Omega^{i}_n, \tilde{\Omega}^{i}_n$ and $\Omega^{i}$ (see Figures \ref{F:f1}-\ref{F:f4}).\\
 \textbf{Convention:} Recall that we use the (local) enumeration $(\ld^{i}_{\ell})_{\ell \geq 0}$ for (distinct) eigenvalues in the band $B_{i} \ni E_{0}$ and the usual enumeration $(\ld_{k})$ for eigenvalues of $H_{L}$ outside the band $B_{i}$ (written in increasing order and repeated according to their multiplicity).
 In the proofs of all results stated in this section, we will suppress the superscript $i$ in $\til^{i}_{k}, \tia^{i}_{k}$ in order to lighten the notation. We will only specify the superscript $i$ in case there is a risk of confusion. Note that, whenever we refer to $\ld_{n}, \ld_{n+1}$ in this section, they are respectively $\ld^{i}_{n}, \ld^{i}_{n+1}$, the $(n+1)-$th and $(n+2)-$th eigenvalues in the band $B_{i}$. However, we will always use the notations $\ld_{k}$ or $\til_{k}$ to refer to the eigenvalues with the usual enumeration which does not depends on bands of $\Sigma_{\zz}$. Finally, as an abuse of notations, $\sum\limits_{k\ne n}$ and $\sum\limits_{k \ne n, n+1}$ stand for, respectively, $\sum \limits_{\ld_{k} \ne \ld^{i}_{n}}$ and $\sum\limits_{\ld_{k} \ne \ld^{i}_{n}, \ld^{i}_{n+1}}$. \\
 When $E_{0}=\inf \Sigma_{\zz}$ and we ignore eigenvalues outsider $\Sigma_{\zz}$, two enumerations will be the same and readers can actually think of this case while following our proof. 
\subsection{Resonances in $\Omega^i_n$}\label{Ss:nlarge}
Recall that the region $\Omega^i_n$ corresponds to the case $\Delta_n < \frac{x_0^2}{\ve L}$ with $x_0=\til^{i}_{n+1}-\til^{i}_n$ which is equivalent to $\kappa (n+1)(\ln (n+1) +1) \gtrsim \ve L$ (see Lemma \ref{L:modulus} for the def. of $\Delta_n$). Then, $n\geq \frac{\eta L}{\ln L}$ with some small $\eta \asymp \frac{\ve}{\kappa}$.\\ 
The schema of studying resonances in $\Omega^i_n$ is split into two steps:\\ 
In Step $1$, we will show that the number of solution of the resonance equation \eqref{eq:4.0} is equal to that of the following equation by using Rouch\'{e}'s theorem. 
\begin{equation}\label{eq:4.4}
f(z):=f_L(z)+\frac{1}{L}e^{-i \theta(E_0)}=0
\end{equation}
Hence, we reduce our problem to count the number of solutions of \eqref{eq:4.4}. Note that, this number is exactly the cardinality of the set $f^{-1}_L \left( \left \{ -\frac{1}{L}e^{-i \theta(E_0)} \right \} \right)$, the inverse image of the number $-\frac{1}{L}e^{-i \theta(E_0)}$.\\
Next, in Step $2$, we partition $\Omega^i_n$ into two parts, the rectangles $ABCD$ and $EFGH$.
First of all, we will show that the image of  the boundary of the rectangle $ABCD$ under $f_L$ is still a simple contour and on this contour, $|f'_L(z)| \gtrsim \frac{1}{n^2}$. Then, by the Argument Principal  to the holomorphic function $f_L$ in $\Omega^i_n$, we infer that $f_L$ is a conformal map from $ABCD$ onto $f_L(ABCD)$ and its inverse is holomorphic as well. Hence, there is at most one resonance in this domain. The existence of that unique resonance depends on whether $f_L(ABCD)$ contains the point $-\frac{1}{L}e^{-i \theta(E_0)}$ or not.\\
Moreover, by studying $f_{L}(EFGH)$, we can conclude that there is at least one resonance which stays either in $ABCD$ or $EFGH$. Besides, if there is a resonance in $ABCD$, that will be the unique resonance in $\Omega^i_n$.\\
Let's start the present subsection with the proof of the statement in Step $1$:
\begin{lemma}\label{L:compare}
The equations \eqref{eq:4.0} and \eqref{eq:4.4} have the same number of solutions in $\Omega^i_n$.
\end{lemma}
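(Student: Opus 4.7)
The plan is to apply Rouch\'e's theorem on $\partial \Omega^i_n$ to the pair
\begin{equation*}
g_1(z) := f_L(z) + \frac{1}{L} e^{-i\theta(E_0 + z/L^2)}, \qquad g_2(z) := f_L(z) + \frac{1}{L} e^{-i\theta(E_0)},
\end{equation*}
whose respective zero sets are the solutions of \eqref{eq:4.0} and \eqref{eq:4.4}. The task thus reduces to verifying the strict inequality $|g_1(z) - g_2(z)| < |g_2(z)|$ pointwise on $\partial \Omega^i_n$.

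First I would derive a global upper bound on $|g_1 - g_2|$. Since $E_0 \in (-2, 2)$ is an interior point of the essential spectrum of $-\Delta$, the quasi-momentum $\theta$ is holomorphic in a neighborhood of $E_0$, and for $z \in \Omega^i_n \subset \tilde{\cD}$ one has $|z|/L^2 \lesssim \ve^2 \ll 1$. A first-order Taylor expansion of $\theta$ followed by one of $e^{-i(\cdot)}$ then gives $|e^{-i\theta(E_0 + z/L^2)} - e^{-i\theta(E_0)}| \lesssim |z|/L^2$. On $\Omega^i_n$ the real part $|\text{Re}\,z| \leq \til^i_{n+1} \asymp (n+1)^2$ dominates the imaginary part $|\text{Im}\,z| \leq x_0^2/(\ve L) \asymp n^2/(\ve L)$, so $|z| \lesssim n^2$, and hence $|g_1 - g_2| \lesssim n^2/L^3 \lesssim \ve^2/L$ uniformly on $\Omega^i_n$.

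Next I would lower-bound $|g_2|$ on the four types of pieces of $\partial \Omega^i_n$. On the top edge $AB$ (where $z$ is real and well separated from the poles), $f_L(z) \in \rr$, hence $|g_2| \geq |\text{Im}\,g_2| = |\sin \theta(E_0)|/L \gtrsim 1/L$; here $E_0$ fixed in $(-2, 2)$ makes $\sin \theta(E_0)$ bounded away from zero. On the bottom edge $GF$ with $|\text{Im}\,z| = x_0^2/(\ve L)$, Lemma \ref{L:Imf} gives $|\text{Im}\,f_L| \gtrsim 1/(\ve L)$, and since $\text{Im}\,f_L \leq 0$ while $\text{Im}(e^{-i\theta(E_0)}/L) > 0$ the two contributions do not cancel, so $|g_2| \gtrsim 1/(\ve L)$. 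On the boundaries of the two excluded corner squares, every point has distance at least $\Delta_n$ to both $\til^i_n$ and $\til^i_{n+1}$, so Lemma \ref{L:modulus} yields $|f_L| \gtrsim 1/\Delta_n$ and hence $|g_2| \gtrsim 1/\Delta_n \gg 1/L$.

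The remaining pieces are the vertical sides $FE$ and $HG$, and this is where the estimate requires some care because neither of the previous lemmas covers them. The key observation is that for $\text{Im}\,z < 0$ \emph{every} summand $\tia_k/(\til_k - z)$ has negative imaginary part, so the imaginary parts add in absolute value. On $FE$ one has $\text{Re}\,z = \til^i_n$, so $\til^i_n - z = -i\,\text{Im}\,z$ is purely imaginary and the pole term alone contributes $-\tia^i_n/|\text{Im}\,z|$; therefore $|\text{Im}\,f_L(z)| \geq \tia^i_n/|\text{Im}\,z| \gtrsim 1/|\text{Im}\,z|$. Since $|\text{Im}\,z| \leq x_0^2/(\ve L) \ll L$, this yields $|g_2| \geq |\text{Im}\,f_L| - 1/L \gtrsim 1/|\text{Im}\,z| \gtrsim \ve L/n^2$, and the side $HG$ is handled symmetrically. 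Taken together, on all of $\partial \Omega^i_n$ the lower bound on $|g_2|$ dominates the uniform bound $|g_1 - g_2| \lesssim \ve^2/L$ (provided $\ve$ is small and the constant $C_1$ in $n \leq \ve L/C_1$ is large), so Rouch\'e's theorem applies and $g_1, g_2$ have the same number of zeros in $\Omega^i_n$. The main obstacle will be the vertical sides, where the usual lemmas just miss the relevant regime and the sign coherence of the polar contributions must be exploited by hand.
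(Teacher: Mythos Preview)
Your proof is correct and follows essentially the same Rouch\'e argument as the paper: bound $|g_1-g_2|\lesssim \ve^2/L$ via analyticity of $\theta$ near $E_0$, then show $|g_2|\gtrsim 1/L$ on each piece of $\partial\Omega^i_n$ using Lemma~\ref{L:modulus} on the corner-square edges, Lemma~\ref{L:Imf} on $FG$, reality of $f_L$ on $AB$, and a direct estimate on $EF$, $GH$. Two small remarks: on $FG$ your sign comment is backwards --- since $\text{Im}\,f_L<0$ and $\text{Im}\big(e^{-i\theta(E_0)}/L\big)>0$ they \emph{do} partially cancel in $g_2$, but the conclusion survives simply because $1/(\ve L)\gg 1/L$; and on $EF$ your one-pole bound $|\text{Im}\,f_L|\ge \tia^i_n/|\text{Im}\,z|\gtrsim \ve L/n^2$ is actually cleaner than the paper's two-term function $\varphi(t)$, whose second summand is unnecessary for the same reason.
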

\begin{proof}[Proof of Lemma \ref{L:compare}]
Define $f(z)$ as in \eqref{eq:4.4} and $g(z):=f_L(z)+\frac{1}{L}e^{-i \theta (E)}$.\\
First of all, we observe that  $f$ and $g$ are holomorphic in $\overline{\Omega^{i}_{n}}$ since $\theta(E)$ is holomorphic in $[E_0, E_0+\ve^2] +i\left[-\frac{x_0^2}{\ve L^3}, 0\right]$ for all $E_0 \in (-2, 2)$.\\ 
Moreover,
\begin{align*}\label{eq:4.5}
|f(z)-g(z)|&=\frac{1}{L} \left|e^{-i\theta(E)}-e^{-i\theta(E_0)}\right| \notag\\ 
&\leq \frac{1}{L}\left |e^{-i\theta(E)}-e^{-i\theta(\text{Re}E)}\right| +\frac{1}{L}\left |e^{-i\theta(\text{Re}E)}-e^{-i\theta(E_0)}\right|\\
& \leq \frac{C}{L} |\text{Im}E | +\frac{C}{L}|\text{Re}E-E_0| \leq \frac{C}{L}\cdot \frac{n^2}{\ve L^3} +\frac{C \ve^2}{L}\leq \frac{C \ve^2}{L}\notag
\end{align*}
where the constant $C$ is independent of $\ve$.\\
Hence, to carry out the proof of the present lemma, it suffices to show that 
\begin{equation*}\label{eq:4.-1}
|f(z)| \gtrsim \frac{1}{L} \text{ on the contour $\gamma_n=\partial \Omega^{i}_n$}.
\end{equation*}
Indeed, assume that we have such an estimate for $f(z)$ on $\gamma_n$. Then, $|f(z)-g(z)| \leq |f(z)|$ on $\gamma_n$. Hence, thanks to Rouch\'{e}'s theorem, $f$ and $g$ have the same number of zeros in the domain $\Omega^i_n$.\\
Moreover, observe that $f_L(z)$ is real iff $z$ is real. Hence, for $z\in \rr$, 
$$|f(z)| \geq |\text{Im} f(z)| =\frac{1}{L}\left | \text{Im}\left(e^{-i\theta(E_0)}\right) \right |= \frac{|\sin \theta(E_0)|}{L}.$$
Note that, $\sin(\theta(E_0)) \ne 0$ since $E_0\in (-2, 2)$.\\     
Hence, to prove \eqref{eq:4.-1}, it is sufficient to show that 
\begin{equation}\label{eq:4.6}
|f_L(z)| \gtrsim \frac{1}{\ve L} \text{ on $\gamma_n \backslash \rr$}.
\end{equation}
We decompose the contour $\gamma_n$ into horizontal and vertical line segments as in Figure \ref{F:f1} .\\
First of all, on the segments $AD, BC, DE, CH$, $|f_L(z)|$ is big (the zone near poles $\til_n$ and $\til_{n+1}$ of $f_L(z)$). More precisely, according to Lemma \ref{L:modulus}, on these segments, 
\begin{equation}\label{eq:4.7}
|f_L(z)| \gtrsim \frac{1}{\Delta_n}\gtrsim \kappa \frac{\ln (n+1)+1}{n+1} \gtrsim \frac{1}{\ve L}.
\end{equation}
Next, on the segment $FG$, by Lemma \ref{L:Imf}, we have 
\begin{equation}\label{eq:4.8}
|\text{Im} f_L(z)| \gtrsim \frac{1}{\ve L}.
\end{equation}
Finally, we study $f_L(z)$ on $EF$ and $GH$. It suffices to consider the segment $EF$ as $\til_n$ and $\til_{n+1}$ play equivalent roles.\\
Let $z\in EF$, hence, $z=\til_n- it$ with $\Delta_n \leq t \leq \frac{x_0^2}{\ve L}$.\\ 
Then, 
\begin{equation*}\label{eq:4.10}
|\text{Im}f_L(z)| \geq |\text{Im}f_{n, L}(z)|=\frac{\tia_n}{t} + \frac{\tia_{n+1} t}{(\til_{n+1}-\til_n)^2+t^2 } \gtrsim \varphi(t)
\end{equation*}
where $\varphi(t):=\frac{1}{t} +\frac{t}{(\til_{n+1}-\til_n)^2+t^2}$.\\
It's easy to check that $\varphi'(t)= -\frac{1}{t^2} +\frac{(\til_{n+1}-\til_n)^2-t^2}{\left[(\til_{n+1}-\til_n)^2+t^2\right]^2} \leq 0 $ for all $t\ne 0$. Hence, $\varphi(t)$ is (strictly) decreasing in the interval $\left[\Delta_n, \frac{x_0^2}{\ve L}\right]$. Therefore,
\begin{equation}\label{eq:4.11}
|\text{Im}f_L(z)| \gtrsim \varphi\left(\frac{x_0^2}{\ve L}\right) \gtrsim \frac{\ve L}{n^2} + \frac{1}{\ve L \left(1+\frac{n^2}{(\ve L)^2} \right)} \gtrsim \frac{1}{\ve L}.
\end{equation}
Thanks to \eqref{eq:4.7}-\eqref{eq:4.11}, the claim in \eqref{eq:4.6} follows and we have Lemma \ref{L:compare} proved. 
\end{proof}
Now, we describe the image of the rectangles $ABCD$ and $EFGH$.  First of all, we consider the rectangle $ABCD$ which is closer to the real axis. 
\begin{lemma}\label{L:simplecontour}
Let $ABCD$ be the rectangle $[\til^{i}_n+\Delta_n, \til^{i}_{n+1}-\Delta_n] +i[-\Delta_n, 0]$
and $\gamma^1_n$ be its boundary.\\ 
Then, $f_L(\gamma^1_n)$ is a simple contour. Besides, we have $|f_L'(z)| \gtrsim \frac{1}{n^2}$ on $\gamma^1_n$.
\end{lemma}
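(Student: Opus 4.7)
I would prove the stronger statement $\text{Re}\,f'_L(z)\gtrsim 1/n^2$ uniformly on the closed rectangle $\overline{ABCD}$, from which both conclusions of the lemma follow. The lower bound $|f'_L(z)|\geq \text{Re}\,f'_L(z)\gtrsim 1/n^2$ on $\gamma^1_n\subset\overline{ABCD}$ is immediate. For the simple contour claim, I would invoke the convex domain univalence criterion: since $\overline{ABCD}$ is convex, for any distinct $z_1,z_2\in\overline{ABCD}$,
\begin{equation*}
f_L(z_2)-f_L(z_1) = (z_2-z_1)\int_0^1 f'_L(z_1+t(z_2-z_1))\,dt,
\end{equation*}
and the integral has strictly positive real part, hence is non-zero. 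So $f_L$ is injective on $\overline{ABCD}$, and in particular on $\gamma^1_n$, making $f_L(\gamma^1_n)$ a simple closed contour.

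\textbf{Key estimate.} For $z=x+iy\in\overline{ABCD}$, one computes
\begin{equation*}
\text{Re}\left(\frac{1}{(\til^i_k-z)^2}\right) = \frac{(\til^i_k-x)^2-y^2}{\bigl((\til^i_k-x)^2+y^2\bigr)^2}.
\end{equation*}
On $\overline{ABCD}$, $|y|\leq\Delta_n$, and $x\in[\til^i_n+\Delta_n,\til^i_{n+1}-\Delta_n]$ gives $|\til^i_k-x|\geq\Delta_n\geq|y|$ for $k\in\{n,n+1\}$; hence these two nearest-pole summands of $f'_L(z)$ contribute non-negatively to $\text{Re}\,f'_L(z)$. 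For the remaining $k\ne n,n+1$ in $B_i$, Lemma \ref{L:asymptoticformula} gives $|\til^i_k-\til^i_n|\asymp|k^2-n^2|\gtrsim n$; with $\kappa$ large in the definition of $\Delta_n$ this forces $|\til^i_k-x|\gg\Delta_n\geq|y|$, so $(\til^i_k-x)^2-y^2\geq\tfrac12(\til^i_k-x)^2$, and each such summand satisfies
\begin{equation*}
\text{Re}\left(\frac{\tia^i_k}{(\til^i_k-z)^2}\right)\gtrsim \frac{\tia^i_k}{(\til^i_k-x)^2}.
\end{equation*}
Summing as in the second-bullet estimate of Lemma \ref{L:modulus} (which established $\tilde f'_{n,L}\asymp 1/(n+1)^2$ in the real case) yields $\text{Re}\,\tilde f'_{n,L}(z)\gtrsim 1/(n+1)^2$; eigenvalues outside $B_i$ contribute negligibly by Lemma \ref{L:outsider1}. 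Combining, $\text{Re}\,f'_L(z)\gtrsim 1/n^2$ throughout $\overline{ABCD}$.

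\textbf{Main difficulty.} The subtlety concentrates at the corners $C$ and $D$, where $(\til^i_k-z)^2$ is purely imaginary for $k=n$ (at $D$) or $k=n+1$ (at $C$); there the nearest-pole terms contribute nothing to $\text{Re}\,f'_L$ and the full lower bound must come from the ``far'' sum $\sum_{k\ne n,n+1}$. Since Lemma \ref{L:modulus} states its pointwise derivative estimate only for real $z$, some care is needed to transfer it to complex $z\in\overline{ABCD}$; the strict separation $|y|\leq\Delta_n\ll|\til^i_k-\til^i_n|$ for $k\ne n,n+1$, guaranteed by $\kappa$ being large, is what allows the real-axis argument to go through essentially unchanged and produces the $1/(n+1)^2$ bound needed to close the argument.
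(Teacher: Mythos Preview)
Your argument is correct and takes a genuinely different route from the paper. The paper proceeds edge by edge: on $AB$ it uses that $f_L$ is real and strictly increasing; on $AD$ and $BC$ it shows $\partial_y\,\text{Im}f_L=\text{Re}f'_L>0$ by checking that every summand in \eqref{eq:4.13} is nonnegative at the fixed abscissa; on $CD$ it shows $\partial_x\,\text{Re}f_L=\text{Re}f'_L>0$ by the same mechanism at the fixed ordinate. It then argues that the four image arcs meet only at the four corners, invoking local bi-holomorphicity (from $|f'_L|\gtrsim 1/n^2$) to rule out tangential self-intersections there. Your approach replaces all of this by the single observation that $\text{Re}f'_L(z)\gtrsim 1/n^2$ holds uniformly on the closed rectangle, and then applies the Noshiro--Warschawski criterion on the convex domain $\overline{ABCD}$ to get global univalence at once. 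This is cleaner, avoids the somewhat informal corner discussion in the paper, and in fact already yields the bijectivity and derivative bound on all of $ABCD$ asserted in the next lemma (Lemma~\ref{L:bi}), which the paper proves separately via the Argument Principle and the Maximum Modulus Principle.

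What the paper's longer proof buys is the detailed quantitative picture of the image: the ranges $\text{Re}f_L\asymp\mp 1/\Delta_n$ on $AD$, $BC$, and the variation of $\text{Im}f_L$ between $-1/\Delta_n$ and $-\Delta_n/n^2$ along $CD$. These estimates are invoked downstream in the proof of Lemma~\ref{L:belong} and of Theorem~\ref{T:abcd}. Your proof of the present lemma does not produce them, so if you adopt this route you would still need to record those edge estimates separately before proceeding to Lemma~\ref{L:belong}.
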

\begin{proof}[Proof of Lemma \ref{L:simplecontour}]
First of all, on the horizontal segment $AB$ where $\til_n+\Delta_n \leq z \leq \til_{n+1}-\Delta_n$, $f_L$ is real-valued and 
$$f_L'(z)=\sum_k\frac{\tia_k}{(\til_k-z)^2} \gtrsim \frac{1}{(\til_{n+1}-\til_n)^2} \gtrsim \frac{1}{n^2} >0.$$
Hence, $f_L(z)$ is strictly increasing on $AB$. Then, $f_L$ is injective and it transforms $AB$ into an interval $[m^1_{-}, m^1_{+}]$ in $\rr$. Note that, since Lemma \ref{L:modulus}, we have  
\begin{align}
&m^1_{-}=f_L(\til_n+\Delta_n)\\
&= -\frac{\tia_n}{\Delta_n}+\frac{\tia_{n+1}}{\til_{n+1}-\til_n-\Delta_n}+\tilde{f}_{n,L}(\til_n+\Delta_n) \lesssim -\frac{1}{\Delta_n}<0\notag
\end{align}
if the constant $\kappa$ in the definition of $\Delta_n$ is large enough.\\
Similarly, $m^1_{+}=f_L(\til_{n+1}-\Delta_n) \gtrsim \frac{1}{\Delta_n} \gtrsim \frac{\ln L}{L}$ for all $n>\frac{\eta L}{\ln L}$.\\
Now, for $z=x+iy\in \mathbb C$, we have 
\begin{align}\label{eq:4.12}
f'_L(z)&=\sum_k \frac{\tia_k}{(\til_k-z)^2}= \sum_k \frac{\tia_k}{(\til_k-x -iy)^2}\\
&=\sum_k \frac{\tia_k}{(\til_k-x)^2}\cdot \frac{1}{\left (1-\frac{i y}{\til_k-x}\right)^2}=\sum_k \frac{\tia_k}{(\til_k-x)^2}\cdot
\frac{\left (1+\frac{i y}{\til_k-x}\right)^2}{ \left[ 1+\left(\frac{y}{\til_k-x} \right)^2 \right]^2}
\notag
\end{align}
Note that, for any holomorphic function $f$, $f'(z)=\frac{\partial f}{\partial x}=\frac{1}{i}\frac{\partial f}{\partial y} $ with $z=x+i y$. Hence, $\frac{\partial}{\partial x}\text{Re} f_L(z)= \frac{\partial}{\partial y}\text{Im} f_L(z)= \text{Re}[f_L'(z)]$ and we put 
\begin{align}\label{eq:4.13}
p(x,y):= \text{Re}[f_L'(z)]= \sum_k  \frac{\tia_k}{(\til_k-x)^2} \cdot \underbrace{\frac{1-\left( \frac{y}{\til_k-x}  \right)^2}{ \left[ 1+\left(\frac{y}{\til_k-x} \right)^2 \right]^2}}_{=:p_k(x,y)}.
\end{align}
Next, we study $f_L$ on the line segment $AD$ where $z=x+ iy$ with $x=\til_n+\Delta_n$ and $-\Delta_n \leq y \leq 0$. In the present case, the identity \eqref{eq:4.13} reads
\begin{align}\label{eq:4.27}
\text{Re}[f_L'(z)]&=\tia_n \frac{\Delta_n^2-y^2}{(\Delta_n^2+y^2)^2}+
\sum_{k\ne n}  \frac{\tia_k}{(\til_k-x)^2}\cdot \frac{1-\left( \frac{y}{\til_k-x}  \right)^2}{ \left[ 1+\left(\frac{y}{\til_k-x} \right)^2 \right]^2}.
\end{align}
along the segment $AD$.\\
Note that, for any $\ld_{k} \ne \ld^{i}_{n}$, we have $(\til_k-x)^2 \gtrsim n^2\gg \Delta_n^2$ for all $z\in AD$. Recall that we are considering the case that $n \geq \eta \frac{L}{\ln L}$.\\
Hence, all the terms in RHS of \eqref{eq:4.27} are positive for all $y \in [-\Delta_n, 0]$. This implies that
$$\frac{\partial}{\partial y}\text{Im}f_L(z) =\text{Re}[f_L'(z)] \gtrsim \sum_{k \ne n} \frac{\tia_k}{(\til_k-x)^2}\gtrsim \frac{1}{n^2} .$$
Hence, along the segment $AD$, $\text{Im}f_L(z)$ is strictly increasing in $y=\text{Im}z$. As a result, $f_L(z)$ is injective on $AD$ and 
\begin{align}\label{eq:4.32}
&0 \geq \text{Im}f_L(z) \geq \text{Im}f_L(D) =  \text{Im}f_L(\til_n+\Delta_n -i \Delta_n)\\
&=-\frac{\tia_n}{2 \Delta_n} -\sum_{k\ne n} \frac{\tia_k \Delta_n}{(\til_k-\til_n)^2+\Delta_n^2} \asymp -\frac{1}{\Delta_n}-\frac{\Delta_n}{n^2} \asymp -\frac{1}{\Delta_n}\notag.
\end{align}
Note that, on $AD$, the $n-$th term $\frac{\tia_n y}{y^2+\Delta_n^2}$ is much bigger than the sum of the other terms in $\text{Im}f_L(z)$. Precisely,
\begin{equation*}\label{eq:4.47}
\text{Im}f_L(z)= \frac{\tia_n y}{y^2+\Delta_n^2} \left (1+ O\left( \frac{1}{\kappa^2 \ln^2 n}\right) \right)=\frac{\tia_n y}{y^2+\Delta_n^2} \left (1+ O\left( \frac{1}{\kappa^2 \ln^2 L}\right) \right).
\end{equation*}
Hence, the monotonicity of $\text{Im}f_L(z)$ in $y$ on $AD$ just comes from that of $\frac{\tia_n y}{y^2+\Delta_n^2}$.\\
Next, we will estimate $\text{Re}f_L(z)$ on $AD$. For all $-\Delta_n \leq y \leq 0$, by Lemma \ref{L:modulus}, we have 
\begin{align}\label{eq:4.33}
\text{Re}f_L(z)&= -\frac{\tia_n \Delta_n}{\Delta^2_n+y^2}+ \frac{\tia_k (\til_{n+1}-\til_n-\Delta_n)}{(\til_{n+1}-\til_n-\Delta_n)^2+y^2}+\text{Re}\tilde{f}_{n,L}(z) \notag\\ 
&\asymp -\frac{1}{\Delta_n}+O\left( \frac{\ln n}{n}\right).
\end{align}
Hence, when we choose the constant $\kappa$ in the definition of $\Delta_n$ to be big enough, $-\frac{\tia_n \Delta_n}{\Delta^2_n+y^2}$ becomes the dominating term in RHS of \eqref{eq:4.33}. Then, $\text{Re}f_L(z)\asymp -\frac{1}{\Delta_n}$.\\
By the equivalent role between $\til_n$ and $\til_{n+1}$, we obtain a similar result for the image of $BC$ under $f_L$, that is, 
the $\text{Im}f_L(z)$ is increasing in $y=\text{Im}z \in [-\Delta_n, 0]$, $\text{Re}f_L(z) \asymp \frac{1}{\Delta_n}$ and $|f_L'(z)| \gtrsim \frac{1}{n^2}$.\\  
Finally, we consider $f_L$ on 
$CD=\{z=x-i\Delta_n | x\in[\til_n+\Delta_n, \til_{n+1}-\Delta_n]\}$.\\ 
Note that, in the present case, the function $p_k(x, y)$ in \eqref{eq:4.13} is strictly positive for any $k\ne n, n+1$ and non-negative otherwise. Hence, $\text{Re} f_L(z)$ is strictly increasing in $x$. 
Hence, on $CD$,
\begin{equation*}\label{eq:4.48}
 -\frac{1}{\Delta_n} \asymp \text{Re} f_L(D) \leq \text{Re} f_L(z) \leq \text{Re} f_L(C) \asymp \frac{1}{\Delta_n}. 
\end{equation*}
Moreover, we have the following estimate for $|f_L'(z)|$: 
\begin{align*}\label{eq:4.49}
|f_L'(z)|\geq \text{Re}[f_L'(z)] \gtrsim \frac{1}{n^2}\cdot \frac{1-\left( \frac{\Delta_n}{\til_{n+2}-x}  \right)^2}{ \left[ 1+\left(\frac{\Delta_n}{\til_{n+2}-x} \right)^2 \right]^2} \gtrsim \frac{1}{n^2}
\end{align*}
for all $x\in[\til_n+\Delta_n, \til_{n+1}-\Delta_n]$.\\
Finally, we give estimates on $\text{Im}f_L(z)$ on $CD$. First of all, on this segment, we have
\begin{equation*}\label{eq:4.50}
-\text{Im}f_{n,L}(z)=\frac{\tia_n \Delta_n}{(\til_n-x)^2+\Delta_n^2}+\frac{\tia_{n+1} \Delta_n}{(\til_{n+1}-x)^2+\Delta_n^2}.
\end{equation*} 
It's easy to see that, as $x$ varies in $[\til_n+\Delta_n, \til_{n+1}-\Delta_n]$, $\frac{\Delta_n}{n^2} \lesssim -\text{Im}f_{n,L}(z) \lesssim \frac{1}{\Delta_n}.$\\ 
On the other hand, for $x\in[\til_n+\Delta_n, \til_{n+1}-\Delta_n]$,
\begin{align*}
-\text{Im} \tilde{f}_{n, L}(z)=\Delta_{n }\sum_{k\ne n, n+1} \frac{\tia_{k}}{(\til_{k}-x)^2+\Delta_{n}^{2}} \asymp \Delta_{n }\sum_{k\ne n, n+1} \frac{\tia_{k}}{(\til_{k}-x)^2} \asymp \frac{\Delta_{n}}{n^{2}}.
\end{align*} 
Note that, $\frac{1}{\Delta_{n}} \gg \frac{\Delta_{n}}{n^{2}}$. Hence, $\text{Im}f_{L}(z)$ varies from $-\frac{1}{\Delta_{n}}$ to $-\frac{\Delta_{n}}{n^{2}}$ on the $CD$.\\
To sum up, the holomorphic function $f_L$ is injective on each edge of the rectangle $ABCD$. 
Hence, the image of each edge under $f_L$ is a non self-intersecting continuous curve. Obviously, since $f_L(AB)$ is a segment in the real axis, it does not intersect the other curves. It's easy to see that $f_L(AD) \cap f_L(BC)=\emptyset$ as well. However, it's not so evident that $f_L(AD)$ and $f_L(CD)$ only intersect at $f_{L}(D)$. In order to prove that, it is necessary to use the estimate on the derivative of $f_L(z)$.  Note that $|f'_L(z)| \gtrsim \frac{1}{n^2}$ on all edges of $ABCD$. On the other hand, $f_L$ is holomorphic in a neighborhood of the rectangle $ABCD$. Hence, $f_L$ is locally bi-holomorphic near any point on $AB$, $BC$, $CD$, $DA$. Hence, $f_L(AD)$ only intersects $f_L(CD)$ at $f_{L}(D)$. Hence, $f_L(\gamma^1_n)$ is a simple contour and $|f'_L(z)| \gtrsim \frac{1}{n^2}$ on $\gamma^1_n$.
\end{proof}

\begin{lemma}\label{L:bi}Let $ABCD$ be the rectangle $[\til^{i}_n+\Delta_n, \til^{i}_{n+1}-\Delta_n] +i[-\Delta_n, 0]$.\\ 
Then, the function $f_L(z)$ is a bijection from the rectangle $ABCD$ onto $f_L(ABCD)$  and its inverse in $A'B'C'D'=f_L(ABCD)$ is holomorphic as well. Moreover, 
$$|f'_L(z)| \gtrsim \frac{1}{n^2} \text{ for all $z \text{ belonging to the rectangle } ABCD$}.$$
Consequently, $f_L(z)$ is a conformal map in the interior of the rectangle $ABCD$, hence, the angles between boundary curves of $A'B'C'D'$ are all $90 \degree$.   
\end{lemma}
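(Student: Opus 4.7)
The plan is to combine Lemma \ref{L:simplecontour} (which gives that $\Gamma := f_L(\gamma^1_n)$ is a simple closed Jordan curve, with $|f'_L| \gtrsim 1/n^2$ on $\gamma^1_n$) with the argument principle. By the Jordan curve theorem, let $U$ be the bounded component of $\cc \setminus \Gamma$. Orienting $\gamma^1_n$ positively and noting that $f_L$ has no poles in $\overline{ABCD}$ (since $\til^{i}_n, \til^{i}_{n+1} \notin \overline{ABCD}$), the argument principle gives, for every $w_0 \notin \Gamma$,
\begin{equation*}
\#\left\{z \in ABCD^\circ : f_L(z) = w_0\right\} = \frac{1}{2\pi i}\oint_{\gamma^1_n}\frac{f'_L(\zeta)}{f_L(\zeta)-w_0}\,d\zeta = n(\Gamma, w_0).
\end{equation*}
The left-hand side is a non-negative integer and $\Gamma$ is simple, so the winding number belongs to $\{0,1\}$, and it is locally constant on $\cc\setminus\Gamma$. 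The open mapping theorem provides some $w_0 \in f_L(ABCD^\circ) \setminus \Gamma$ with at least one preimage, forcing $n(\Gamma, w_0) = 1$; since $f_L(ABCD)$ is bounded, the component of $\cc \setminus \Gamma$ containing $w_0$ must be $U$, so $n(\Gamma, \cdot) \equiv 1$ on $U$ and $\equiv 0$ on the exterior.

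From these counts, every $w_0 \in U$ has exactly one preimage in $ABCD^\circ$ and no exterior point is hit. A short open-mapping argument rules out $f_L(ABCD^\circ) \cap \Gamma \neq \emptyset$: if some interior $z$ had $f_L(z) \in \Gamma$, then $f_L$ would send a small disk around $z$ to an open neighborhood of $f_L(z)$ straddling $\Gamma$, producing preimages of exterior points inside $ABCD^\circ$, contradicting $n \equiv 0$ outside. Combined with injectivity of $f_L$ on $\gamma^1_n$ from Lemma \ref{L:simplecontour}, this establishes that $f_L$ is a bijection from $\overline{ABCD}$ onto $\overline{U} = \overline{A'B'C'D'}$. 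A holomorphic injection has nowhere-vanishing derivative, so $f'_L \neq 0$ in $ABCD^\circ$ and $f_L^{-1} : U \to ABCD^\circ$ is holomorphic; this is the conformality statement, and the right angles at $A, B, C, D$ are preserved at the corresponding corners of $A'B'C'D'$.

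To promote the boundary bound to the whole closed rectangle, I would apply the maximum modulus principle to $1/f'_L$. Since $f'_L$ is holomorphic on an open neighborhood of $\overline{ABCD}$, nonvanishing on $ABCD^\circ$ by the preceding step, and satisfies $|f'_L| \gtrsim 1/n^2$ on $\gamma^1_n$ by Lemma \ref{L:simplecontour}, it is nonvanishing on a neighborhood of $\overline{ABCD}$, so $1/f'_L$ is holomorphic there. Maximum modulus then yields
\begin{equation*}
\sup_{z \in \overline{ABCD}} \left|\frac{1}{f'_L(z)}\right| = \sup_{z \in \gamma^1_n}\left|\frac{1}{f'_L(z)}\right| \lesssim n^2,
\end{equation*}
which is the desired uniform lower bound $|f'_L(z)| \gtrsim 1/n^2$ throughout $ABCD$. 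The only delicate point in the whole argument is the winding-number bookkeeping in the first paragraph, in particular the verification that no interior point is mapped onto $\Gamma$; this is handled by the open-mapping trick described above, and everything else follows automatically from Lemma \ref{L:simplecontour} and standard complex analysis.
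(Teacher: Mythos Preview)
Your proof is correct and follows essentially the same route as the paper's: invoke Lemma~\ref{L:simplecontour} for the simple image contour and the boundary bound $|f'_L|\gtrsim 1/n^2$, use the argument principle to upgrade boundary injectivity to interior injectivity, deduce $f'_L\neq 0$ and holomorphy of the inverse, and then apply the maximum modulus principle to $1/f'_L$ to push the lower bound inside. Your write-up is in fact more careful than the paper's on two points it glosses over---why the winding number is exactly $1$ on the bounded component, and why no interior point is sent onto $\Gamma$---but the underlying argument is the same.
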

\begin{proof}[Proof of Lemma \ref{L:bi}]
Denote by $\gamma^1_n$ the boundary of the rectangle $ABCD$.
According to Lemma \ref{L:simplecontour}, $f_L(\gamma^1_n)$ is a simple contour and $f_L$ is injective in $\gamma^1_n$. Hence, for any interior point $w$ of $f_L(ABCD)$, the contour $f_L(\gamma^1_n)$ travels counterclockwise around $w$ exactly one times. Hence, 
thanks to Argument Principle, the equation $f_L(z)=w$ in $ABCD$ has the unique solution. In other words, $f_L$ is injective in the interior of the rectangle $ABCD$. By using Open Mapping Theorem, we infer that $f_L(z)$ is bijective from the rectangle $ABCD$ onto $f_L(ABCD)$ and its inverse in $f_L(ABCD)$ is holomorphic. Moreover, $f_L'(z) \ne 0$ for all $z$ in the rectangle $ABCD$. Hence, by using the Maximum Modulus Principle for the holomorphic function $\frac{1}{f'_L}$, we have $ |f'_L(z)| \gtrsim \frac{1}{n^2} \text{ in $ABCD$}.$\\
The holomorphic function $f_L$ is therefore a conformal map in the rectangle $ABCD$ and the claim follows.
\end{proof}
We observe that the domain $MNOP \backslash A'B'C'D'$ is included in the image of  $EFGH$ under $f_L$.
\begin{lemma}\label{L:belong}
Put $MNOP=\left[-\frac{1}{C\Delta_n}, \frac{1}{C\Delta_n}\right] -i\left[0, \frac{2 |\sin\theta(E_0)|}{L}\right]$ with $C>0$ large.
Let $A'B'C'D'= f_{L}(ABCD)$.\\
 Assume that $-\frac{1}{L}e^{-i \theta(E_0)} \in MNOP \backslash A'B'C'D'$. Then, 
$$f_L(EFGH) \supset MNOP \backslash A'B'C'D'.$$
\end{lemma}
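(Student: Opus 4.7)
The plan is to apply the Argument Principle to $f_{L}$ on the closed rectangle $EFGH$, and then propagate from a single preimage to the whole target set via a connectedness argument. For each $w_{0}\in MNOP\setminus A'B'C'D'$, it suffices to show that the winding number $N_{EFGH}(w_{0})$ of $f_{L}(\partial EFGH)$ around $w_{0}$ is at least $1$; since $f_{L}$ has no poles in a neighborhood of $EFGH$, this will produce a preimage $z\in EFGH$ and hence $w_{0}\in f_{L}(EFGH)$.

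The first step is to verify that $f_{L}(\partial EFGH)$ does not meet $MNOP\setminus A'B'C'D'$. On the vertical edges $EF$ and $GH$, the pole at $\til^{i}_{n}$, resp.\ $\til^{i}_{n+1}$, gives $|\mathrm{Im}\,f_{L}(z)|\gtrsim \frac{\ve L}{n^{2}}$; on the bottom edge $FG$, Lemma~\ref{L:Imf} yields $|\mathrm{Im}\,f_{L}(z)|\gtrsim \frac{1}{\ve L}$; on the short top segments $HC$ and $DE$ at height $-\Delta_{n}$, the nearby-pole term alone gives $|\mathrm{Im}\,f_{L}(z)|\gtrsim \frac{1}{\Delta_{n}}$. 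For $\ve$ small and $C_{1},\kappa$ large each of these bounds dominates $\frac{2|\sin\theta(E_{0})|}{L}$, so the corresponding images lie strictly below $MNOP$. On the middle segment $CD$ the image lies on $\partial A'B'C'D'$, which is disjoint from $MNOP\setminus A'B'C'D'$ by definition.

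The second step uses the chain decomposition $\partial\Omega^{i}_{n}=\partial(ABCD)+\partial(EFGH)$: the shared edge $CD$ appears with opposite orientations on the two sides and cancels. Applying the Argument Principle to $f_{L}$ on $\Omega^{i}_{n}$ (no poles) yields
\[
N_{\Omega^{i}_{n}}(w_{0}) = N_{ABCD}(w_{0}) + N_{EFGH}(w_{0}),
\]
where $N_{R}(w_{0})$ counts the preimages of $w_{0}$ in $R$. Lemma~\ref{L:bi} shows $f_{L}:ABCD\to A'B'C'D'$ is a biholomorphism, hence $N_{ABCD}(w_{0})=0$ for $w_{0}\notin A'B'C'D'$. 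It therefore remains to produce one preimage of $w_{0}$ somewhere in $\Omega^{i}_{n}$, which will force $N_{EFGH}(w_{0})\geq 1$.

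To produce such a preimage I rely on local conformality near $CD$: by Lemma~\ref{L:simplecontour}, $f'_{L}\neq 0$ on $CD$, so $f_{L}$ is locally biholomorphic there and $f_{L}(EFGH^{\circ})$ contains a one-sided neighborhood of $f_{L}(CD)$ lying on the exterior side of $A'B'C'D'$. Using the hypothesis $-\frac{1}{L}e^{-i\theta(E_{0})}\in MNOP\setminus A'B'C'D'$, the narrowness of $MNOP$ relative to $A'B'$ (namely $\frac{2}{C\Delta_{n}}\ll \frac{2}{\Delta_{n}}$ for $C$ large, so $A'B'\supset MN$), and the monotonicity of $\mathrm{Re}\,f_{L}$ along $CD$, the intersection $\partial A'B'C'D'\cap MNOP$ reduces to a single arc of $f_{L}(CD)$, so $A'B'C'D'$ meets $MNOP$ as a single ``cap'' descending from $MN$; consequently $MNOP\setminus A'B'C'D'$ is a connected open region lying just outside this cap and directly abutting $f_{L}(CD)$ from the exterior side. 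Since $f_{L}(\partial EFGH)$ avoids $MNOP\setminus A'B'C'D'$ (Step 1), this region lies in a single connected component of $\mathbb{C}\setminus f_{L}(\partial EFGH)$ that also contains the one-sided neighborhood above; the open mapping theorem (the image $f_{L}(EFGH^{\circ})$ is open with topological boundary in $f_{L}(\partial EFGH)$) then forces this entire component into $f_{L}(EFGH^{\circ})$. The main obstacle is the geometric claim that $A'B'C'D'$ meets $MNOP$ as a single cap — this requires combining the explicit shape of $\partial A'B'C'D'$ from Lemma~\ref{L:simplecontour} with a comparison of $\frac{\Delta_{n}}{n^{2}}$ (the depth of the arc $f_{L}(CD)$ at the centre) against $\frac{|\sin\theta(E_{0})|}{L}$ (the vertical extent of $MNOP$), in the regime $n\gtrsim \frac{L}{\ln L}$.
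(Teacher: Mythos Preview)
Your boundary estimates in Step 1 are correct and coincide with the paper's: the paper likewise shows $|\mathrm{Im}\,f_L|\gtrsim \frac{1}{\ve L}$ on every edge of $\partial(EFGH)$ apart from $CD$ (with a slightly more careful monotonicity argument on $EF$ and $GH$), and then invokes the open mapping theorem, leaving the remaining topology implicit. Your chain decomposition in Step 2 is also correct.

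The gap is in Step 3. Having reduced in Step 2 to proving $N_{\Omega^i_n}(w_0)\geq 1$, you abandon that route and instead try to locate a preimage in $EFGH$ directly, which forces you into the ``single cap'' claim about the connectedness of $MNOP\setminus A'B'C'D'$ that you yourself flag as unresolved. In fact nothing in Lemma~\ref{L:simplecontour} prevents $f_L(CD)$ (a graph over the real axis, since $\mathrm{Re}\,f_L$ is monotone on $CD$) from dipping below the bottom of $MNOP$ inside the strip $|\mathrm{Re}\,w|\leq \frac{1}{C\Delta_n}$ and coming back up, disconnecting that set. The clean fix is to use Step 2 as intended: your estimates, combined with the bounds $|\mathrm{Re}\,f_L|\asymp\frac{1}{\Delta_n}$ on $AD,BC$ from Lemma~\ref{L:simplecontour}, give $f_L(\partial\Omega^i_n)\cap MNOP^\circ=\emptyset$; hence the winding number of $f_L(\partial\Omega^i_n)$ is constant on the \emph{connected} rectangle $MNOP^\circ$, and at a point just below the real axis it is at least $1$ by local conformality near $AB$. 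Thus $N_{\Omega^i_n}(w_0)\geq 1$ for every $w_0\in MNOP^\circ$, and since $N_{ABCD}(w_0)=0$ off $A'B'C'D'$ you get $N_{EFGH}(w_0)\geq 1$ with no geometric hypothesis on $MNOP\setminus A'B'C'D'$. (Alternatively, because $f_L(CD)$ is a graph, each connected component of $MNOP\setminus A'B'C'D'$ necessarily abuts $f_L(CD)$ from below, so your local-conformality argument can be run on each component separately; but the $\Omega^i_n$ route is shorter.)
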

We will skip the proof of Lemma \ref{L:belong} for a while and make use of this lemma to describe resonances in the domain $\Omega^i_n$.
\begin{proof}[Proof of Theorem \ref{T:abcd}]
Recall that, thanks to Lemma \ref{L:compare}, the number of rescaled resonances $z$ is the cardinality of $f^{-1}_L \left( \left\{ -\frac{1}{L}e^{-i \theta(E_0)}\right \} \right)$. Hence, for the existence of resonances, we have to check if the point $-\frac{1}{L}e^{-i \theta(E_0)}$ belongs to $\Omega^{i}_n$. Note that $-\frac{1}{L}e^{-i \theta(E_0)}$ always stays inside the open rectangle $MNOP=\left[-\frac{1}{C\Delta_n}, \frac{1}{C\Delta_n}\right] -i\left[0, \frac{2 |\sin \theta(E_0)|}{ L}\right]$ where $C>0$ is a big constant. We consider two possibilities. First of all, assume that $-\frac{1}{L}e^{-i \theta(E_0)}$ belongs to $A'B'C'D'$. Then, by Lemma \ref{L:bi}, there exists one and only one rescaled resonance $z_n$ in $ABCD$. When that case happens, $|\text{Im}z_n| \leq \Delta_n=\frac{n}{\kappa \ln n}\asymp \frac{n}{\kappa \ln L}$. Remark that, this case can not happen for all $\ve L \geq n > \eta \frac{L}{\ln L}$. For example, when $n=\ve L$ i.e., the real part of resonance is far from $\partial \Sigma_{\mathbb Z}$ by a constant distance, there are no rescaled resonances in $ABCD$ according to \cite[Theorem 1.2]{klopp13}.\\
Now, assume that the other case happens i.e., $-\frac{1}{L}e^{-i \theta(E_0)} \in MNOP \backslash A'B'C'D'.$\\
In this case, $f_L(EFGH)$ contains $MNOP \backslash A'B'C'D'$ by Lemma \ref{L:belong}. Then, $-\frac{1}{L}e^{-i \theta(E_0)}$ stays in the image of $EFGH$ under $f_L$. Hence, there exists a rescaled resonance in the rectangle $EFGH$ and note that the imaginary part of such a rescaled resonance is smaller than $-\frac{x_0^2}{\ve L}$ and bigger than $-\Delta_n$.   
\end{proof}
Finally, to complete the subsection, we state here the proof of Lemma \ref{L:belong}.
\begin{proof}[Proof of Lemma \ref{L:belong}]
Note that $|\text{Re}f_L(z)|$ is bigger than $\frac{1}{C\Delta_n}$ on segments $AD$ and $BC$ if $C$ is large enough.\\ 
Then, the hypothesis that $-\frac{1}{L}e^{-i \theta(E_0)} \in MNOP \backslash A'B'C'D'$ yields $\frac{\Delta_n}{n^2}\leq 2\frac{|\sin(\theta(E_0))|}{L}$. Hence,
$$ \frac{\Delta_{n}}{n^{2}}\leq 2\frac{|\sin(\theta(E_0))|}{L} \leq \frac{1}{\ve L} \ll \frac{1}{\Delta_{n}}.$$
By the open mapping theorem, the image of the open rectangle $EFGH$ is still a bounded domain in $\mathbb C$. From the study of the curve $f_L(CD)$ in Lemma \ref{L:simplecontour}, we know that, the imaginary part of $f_L(CD)$ increases from $-\frac{1}{\Delta_n}$ to $-\frac{\Delta_n}{n^2}$.  
Hence, it suffices to show that the imaginary part of $f_L$ on all parts of the boundary of $EFGH$ except for $CD$ is smaller than $-\frac{1}{\ve L}$ up to a constant factor.\\
First of all, by Lemma \ref{L:Imf},
$\text{Im}f_L(z) \lesssim -\frac{1}{\ve L}$ on $FG$. 
Next, we consider segments $ED$ and $CH$. By symmetry, it suffices to study the image of $f_L$ on $ED$.\\
Let $z\in ED$. Then, $z=x-i\Delta_n$ with $x\in [\til_n, \til_n+\Delta_n]$,  
\begin{align}\label{eq:4.34}
-\text{Im}f_L(z)=\frac{\tia_n \Delta_n}{(\til_n-x)^2+\Delta_n^2} + \Delta_n\sum_{k\ne n}\frac{\tia_k }{(\til_k-x)^2+\Delta_n^2}.
\end{align}
According to Lemma \ref{L:modulus}, the second term of RHS of \eqref{eq:4.34} is bounded by $\frac{\Delta_n}{n^2}$. On the other hand, the first term is bigger than $\frac{\tia_n}{2 \Delta_n}\gg \frac{\Delta_n}{n^2}$ since $n>\frac{\eta L}{\ln L}$.
Hence, 
\begin{equation}\label{eq:4.44}
-\text{Im}f_L(z)=\frac{\tia_n \Delta_n}{(\til_n-x)^2+\Delta_n^2} \left (1+O\left(\frac{1}{\ln^2 L} \right) \right)
\end{equation}
uniformly in $x\in [\til_n, \til_n+\Delta_n]$.
Since the function $\frac{\tia_n \Delta_n}{(\til_n-x)^2+\Delta_n^2}$ is decreasing in $x\in [\til_n, \til_n+\Delta_n]$, we infer that $\text{Im}f_L(z)$ is strictly increasing, hence, $f_L(z)$ is injective on $ED$. Moreover, 
$$\text{Im}f_L(z)\asymp -\frac{1}{\Delta_n} \ll -\frac{1}{\ve L}.$$
We consider now $f_L(z)$ on the vertical segment $EF$ where $z=x+i y$ with $x\equiv \til_n$ and $-\frac{x_0^2}{\ve L} \leq y \leq -\Delta_n$.
Then,
\begin{equation}\label{eq:4.23}
f_L'(z)=-\frac{\tia_n}{y^2} + \sum_{k\ne n} \frac{\tia_k}{(\til_k-\til_n)^2}\cdot
\frac{\left (1+\frac{i y}{\til_k-\til_n}\right)^2}{ \left[ 1+\left(\frac{y}{\til_k-\til_n} \right)^2 \right]^2}.
\end{equation}
Hence, 
\begin{equation}\label{eq:4.24}
-\frac{\partial}{\partial y}\text{Im}f_L(z) =-\text{Re}[f_L'(z)]= \frac{\tia_n}{y^2}-\underbrace{\sum_{k \ne n} \frac{\tia_k}{(\til_k-\til_n)^2}\cdot \frac{1-\left( \frac{y}{\til_k-\til_n}  \right)^2}{ \left[ 1+\left(\frac{y}{\til_k-\til_n} \right)^2 \right]^2}}_{=:s(y)}
\end{equation}
 For $k\ne n$, let $u_k:=\frac{y^2}{(\til_k-\til_n)^2}$ and $\psi(u_k):=\frac{1-u_k}{(1+u_k)^2}$. Then, 
$\psi'(u_k)=\frac{u_k-3}{(u_k+1)^3}$.\\
Note that, in the present case,
$$0 < u_k \lesssim \frac{\left(\frac{n}{\ve L} \right)^2 \cdot n^2}{n^2} \lesssim \left(\frac{n}{\ve L} \right)^2. $$
Hence, for any $n \leq \frac{\ve L}{C}$ with $C$ large and $k \ne n$, $u_k \in (0, 1/2]$.\\
Hence, $\psi(u_k)$ is decreasing and $\frac{2}{9}=\psi\left(\frac{1}{2}\right) \leq \psi(u_k) \leq \psi(0)=1.$\\
Therefore, there exists a numeric constant $\mu$ s.t. 
\begin{equation}\label{eq:4.25}
\frac{1}{\mu n^2} \leq s(y) \leq \frac{\mu}{n^2}.
\end{equation}
\eqref{eq:4.24} and \eqref{eq:4.25} yield that 
\begin{equation}\label{eq:4.26}
-\frac{\partial}{\partial y}\text{Im}f_L(z) \geq \frac{c_0 (\ve L)^2}{n^4} -\frac{\mu}{n^2} \geq \frac{\mu}{n^2}
\end{equation}
for all $n  \leq \frac{\ve L}{C_1}$ with $C_1=C_1(\alpha, c_0, \mu)$ large enough.\\
Hence, in the present case, $\text{Im}f_L(z)$ is decreasing in $y$. As a result, the function $f_L(z)$ is injective on $EF$ and 
$|f'_L(z)| \geq |\text{Re}[f'(z)]| \gtrsim \frac{1}{n^2}$.\\
Besides, on $EF$,
\begin{align}\label{eq:4.30}
\text{Im}f_L(z) &\geq \text{Im}f_L(\til_n-i\Delta_n)=\text{Im}f_L(E) \\
&=-\frac{\tia_n}{\Delta_n}-\Delta_n \sum_{k\ne n} \frac{\tia_k}{(\til_k-\til_n)^2+\Delta_n^2} \asymp -\frac{1}{\Delta_n};\notag
\end{align}
and 
\begin{align}\label{eq:4.31}
&\text{Im}f_L(z) \leq \text{Im}f_L\left(\til_n-i \frac{x_0^2}{\ve L}\right)= \text{Im}f_L(F) \\
&\asymp -\frac{\tia_n \ve L}{n^2}-\frac{n^2}{\ve L} \sum_{k\ne n} \frac{\tia_k}{(\til_k-\til_n)^2+\frac{n^4}{(\ve L)^2}}\asymp -\frac{\ve L}{n^2} \lesssim -\frac{1}{\ve L}\notag.
\end{align}
By symmetry, we have the same conclusion for the image of $GH$ under $f_L$.\\
To sum up, the images of $ED, CH, EF, FG, GH$ under $f_L$ stay below the horizontal $y=-\frac{\tilde{C}}{\ve L}$ in the complex plane with some positive constant $\tilde{C}$. Hence, the claim follows. 
\end{proof}
\subsection{Resonances in $\tilde{\Omega}^{i}_n$ and $\Omega^{i}$}\label{Ss:nsmall}
First of all, all sides of the rectangle $\tilde{\Omega}^i_n$ are included in horizontal and vertical segments of $\Omega^i_n$. Hence, Lemma \ref{L:compare} still hold for $\tilde{\Omega}^i_n$. We will prove the existence and uniqueness of rescaled resonances in $\tilde{\Omega}^i_n$. 
\begin{proof}[Proof of Theorem \ref{T:contour1}]
Let $\tilde{\gamma}_n$ be the boundary of $\tilde{\Omega}^i_n$.\\
It's easy to check that, the monotonicity and the estimates we made for the real and imaginary part of $f_L(z)$ and $|f'_L(z)|$ on $AB$, $BC$, $AD$ in Lemma \ref{L:simplecontour} still hold for $A_1B_1$, $A_1 D_1$, $B_1 C_1$ of the contour $\tilde{\gamma}_n$.  
Now, we study the image of $C_1 D_1$ under $f_L$. Let $z=x+iy \in C_1 D_1$ with $x\in [\til_n+\Delta_n, \til_{n+1}-\Delta_n]$ and $y \equiv -\frac{x_0^2}{\ve L}$.\\
Note that, in the present case, $\Delta_n \geq \frac{x_0^2}{\ve L}$. Hence,$|\til_k-x| \geq |y|$ for all $k$.
Moreover, for $k \ne n, n+1$, $|\til_k-x| \gtrsim n \gg |y|$. Then, since \eqref{eq:4.13}, we have  
\begin{equation}\label{eq:}
\text{Re}[f'_L(z)] \gtrsim \sum_{k \ne n, n+1} \frac{1}{(\til_k -x)^2} \gtrsim \frac{1}{n^2}
\end{equation}
for all $z \in C_1D_1$.\\
Hence, $\text{Re}f_L(z)$ is still strictly increasing in $x$ on $C_1D_1$. Finally, we compute the magnitude of $\text{Im} f_L(z)$ on $C_1D_1$.
\begin{align}\label{eq:4.51}
-\text{Im}f_L(z) &\geq -\text{Im}f_{n,L}(z) \asymp |y| \left( \frac{1}{(\til_n-x)^2+y^2}+\frac{1}{(\til_{n+1}-x)^2+y^2}\right) \notag\\
& \gtrsim \frac{|y|}{n^2+y^2} \gtrsim \frac{1}{\ve L}.
\end{align} 
Hence, $\text{Im}f_L(z) \lesssim -\frac{1}{\ve L}$. Then, using the same argument as in Lemmata \ref{L:simplecontour} and \ref{L:bi}, we infer that $f_L$ is bijective from $\tilde{\Omega}^i_n$ on $f_L(\tilde{\Omega}^i_n)$. Moreover, thanks to \eqref{eq:4.51}, we deduce that the point $-\frac{e^{-i \theta(E_0)}}{L}$ belongs to $f_L(\tilde{\Omega}^i_n)$. Hence, there exists a unique rescaled resonance in $\tilde{\Omega}^i_n$.  
\end{proof}
Finally, we show that, there are no rescaled resonances in $\mathcal R^{i}$.
\begin{proof}[Proof of Theorem \ref{T:contour3}]
Note that if $E_{0}$ is an eigenvalue of $H_{L}$ for $L$ large i.e., $E_{0}= \ld^{i}_{0}$, we have $\mathcal R^{i} = \emptyset$. Let's assume now that $E_{0}$ is not an eigenvalue of $H_{L}$ for $L$ large.\\
First of all, we will check that the rescaled resonance equation \eqref{eq:4.0} in $\Omega^i$ can be replaced by $f_L(z)=-\frac{e^{-i \theta(E_0)}}{L}$.\\ Indeed, along the segment $A_3B_3$, $f_L(z)$ is real. Along $B_3C_3$, $|f_L(z)|$ is big. Along $C_3D_3$, $|\text{Im}f_L(z)|$ is big. 
Hence, to prove Lemma \ref{L:compare} for $\Omega^{i}$, it suffices to check that 
$$|f_L(z)| \gtrsim \frac{1}{\ve L} \text{ on $A_3D_3$}.$$ 
Put $z=iy \in A_3D_3$ with $0\geq y \geq -\frac{1}{\ve L}$. Assume that $\ld^{i}_{k} \geq E_{0} + 2\ve_{1}$ with $\ve_{1}\asymp \ve^{2}$ for all $k> \ve L$ and $\ld^{i}_{k} \in B_{i}$. Then, by Lemma \ref{L:outsider1}, we have 
\begin{equation}\label{eq:4.57}
\text{Re}f_L(z)= \sum_{k=0}^{\ve L} \frac{\tia^{i}_k \til^{i}_k}{(\til^i_k)^{2}+y^2} + O\left( \frac{1}{\ve_{1}L} \right). 
\end{equation}
For any $\lambda_k \notin B_{i}$, $|\til_k|=L^2|\lambda_k-E_0| \gtrsim L^2$. On the other hand, if $\lambda_k \in B_i$, we have $\lambda_{k} \ne E_{0}$ and $|\til_k| =L^2|\lambda_k-E_0| \geq L^2|\ld^i_0-E_0| \gtrsim 1$. Hence, $|\til_k| \geq \frac{1}{C} \gg |y|$ for all $\lambda_k$. On the other hand, $\til^{i}_{k}>0$ for all $\ld^{i}_{k} \in B_{i}$. Consequently, 
\begin{equation}\label{eq:4.61}
\sum_{k \leq \ve L} \frac{\tia^{i}_k \til^{i}_k}{(\til^{i}_k)^{2}+y^2} \asymp \sum_{k \leq \ve L} \frac{\tia^{i}_{k}}{\til^{i}_{k}} \asymp  \sum_{k=1}^{\ve L}\frac{1}{k^{2}} \asymp 1.
\end{equation}
The estimates \eqref{eq:4.57} and \eqref{eq:4.61} yield $\text{Re}f_L(z) \asymp 1$ on $A_{3}D_{3}$. Hence, Lemma \ref{L:compare} holds true for $\Omega^{i}$.\\
Next, we will study the image of the contour $A_{3}B_{3}C_{3}D_{3}$ under $f_{L}$.\\
On $A_3B_3$, $f_L(z)$ is real and strictly increasing. Hence
\begin{align}\label{eq:4.62}
\sum_{k=0}^L \frac{\tia_k}{\til_k} \leq f_L(z) \leq \sum_{k=0}^L \frac{\tia_k}{\til_k-\til^{i}_0+\delta_1}
\end{align}
where $C$ is a positive constant.\\
Thanks to Lemma \ref{L:outsider1} and \eqref{eq:4.61}, it is easy to see that $\sum\limits_{k=0}^L \frac{\tia_k}{\til_k} \asymp 1$. 
Similarly, we have
\begin{equation}\label{eq:4.67}
\sum_{k=0}^L \frac{\tia_k}{\til_k-\til^{i}_0+\delta_1}= \frac{\tia^{i}_{0}}{\delta_{1}} + \sum_{k=1}^{\ve L} \frac{\tia^{i}_k}{\til^{i}_k-\til^{i}_0+\delta_1} + O\left( \frac{1}{\ve_{1} L}\right) \asymp \frac{1}{\delta_{1}}. 
\end{equation}
Hence, $f_{L}(z) \asymp 1$ on the interval $A_{3}B_{3}$.\\
Next, we consider the segment $A_3D_3$. Since $|\til_{k}| \gtrsim 1 \gg |y|$ for all $\ld_{k} \in \Sigma_{\zz}$, we have
\begin{align}\label{eq:4.63}
\frac{\partial}{\partial y}\text{Im}f_L(z) = \text{Re}[f'_L(z)]=\sum_{k=0}^L \frac{\tia_k}{\til_k^2}\cdot
\frac{1-\frac{y^2}{\til_k^2}}{ \left[ 1+\frac{y^2}{\til_k^2}  \right]^2}\gtrsim \sum_{k=0}^L \frac{\tia_k}{\til_k^2}
\end{align}
where $z=iy$ with $0\geq y \geq -\frac{1}{\ve L}$.\\ 
We will show that, for all $z= x+iy \in \Omega^{i}$,
\begin{equation}\label{eq:4.68}
\sum_{k=0}^L \frac{\tia_k}{(\til_k-x)^2+y^2} \asymp 1.
\end{equation}
Indeed, since $|\til_k-x| \gg |y|$ for all $z=x+iy\in \Omega^{i}$ and all $\ld_{k} \in \Sigma_{\zz}$, we have $(\til_k-x)^2+y^2 \asymp (\til_k-x)^2$. Then, argument as in \eqref{eq:4.61}, \eqref{eq:4.67}, we have \eqref{eq:4.68} follow.\\ 
Consequently, $\text{Im}f_L(z)$ is strictly increasing on $A_{3}D_{3}$ and $|f'_L(z)| \gtrsim 1$ on $A_{3}D_{3}$.\\
Now, we give estimates on the real and imaginary parts of $f_L(z)$ on $A_3D_3$.
\begin{align}\label{eq:4.64}
0\geq \text{Im}f_L(z) \geq \text{Im}f_L(D_3) = \text{Im}f_L\left (-\frac{i}{\ve L}\right) \asymp -\frac{1}{\ve L} \left( \sum_{k=0}^L \frac{\tia_k}{\til_k^2} \right) \asymp -\frac{1}{\ve L}.
\end{align}
Besides, as we proved before, $\text{Re}f_{L}(z) \asymp 1$ on $A_{3}D_{3}$.\\
Similarly, we have the same conclusion for $f_{L}(z)$ on $B_{3}C_{3}$.\\
Finally, we study $f_L$ on $C_3D_3$. Let $z\in C_3D_3$, $z=x+i y$ where $x\in [0, \til^{i}_0-\delta_1]$ and $y\equiv  -\frac{1}{\ve L}$.\\
Using the equation \eqref{eq:4.13}, we can check easily that $\text{Re}[f'_L(z)]$ is bigger than a positive constant on $C_3D_3$.
Hence, $\text{Re}f_L(z)$ is strictly increasing in $x$. 
Consequently, on $C_3D_3$,
\begin{equation}\label{eq:4.65}
 1 \asymp \text{Re} f_L(D_3) \leq \text{Re} f_L(z) \leq \text{Re} f_L(C_3) \asymp 1. 
\end{equation}
Finally, we compute the magnitude of $\text{Im}f_L(z)$. For $z=x+iy \in C_3D_3$, \eqref{eq:4.68} yield
\begin{equation}\label{eq:4.66}
\text{Im}f_L(z) = y \left( \sum_{k=0}^L \frac{\tia_k}{(\til_k-x)^2+y^2}  \right) \asymp -\frac{1}{\ve L}.
\end{equation}
To sum up, $f_L$ is bijective from $\Omega^i$ to $f_L(\Omega^i)$ and $|f'_L(z)| \gtrsim 1$ for all $z\in \Omega^i$. Moreover, there exists a positive constant $c$ such that $\text{dist}(0, f_L \left(\Omega^i)\right) \geq c$. Hence, $-\frac{e^{-i \theta(E_0)}}{L} \notin f_L(\Omega^i)$ which implies that there are no resonances in $\Omega^i$.
\end{proof}

\def\cprime{$'$}

\noindent {\tiny (Trinh Tuan Phong) Laboratoire Analyse, G\'{e}om\'{e}trie $\&$ Applications,\\ 
	UMR 7539, Institut Galil\'{e}e, Universit\'{e} Paris 13, Sorbonne Paris Cit\'{e},\\ 
	99 avenue J.B. Cl\'{e}ment, 93430 Villetaneuse, France\\
\emph{Email}: trinh@math.univ-paris13.fr 
}

\end{document}